\newcolumntype{C}[1]{>{\centering\arraybackslash}p{#1}}
\newcommand{\ceil}[1]{ \left\lceil #1 \right\rceil }
\newcommand{\floor}[1]{ \left\lfloor #1 \right\rfloor }
\newcommand{\A}{ \mathcal{A} }
\newcommand{\sort}{\mathcal{U}_k}
\newcommand{\CV}{V_q}
\newcommand{\E}{ \mathbb{E} }
\newcommand{\calD}{\mathcal{D}}
\renewcommand{\top}{\operatorname{\mathsf{top}}}
\newcommand{\hP}{\hat{P}}
\DeclareMathOperator*{\argmax}{arg\,max}
\DeclareMathOperator*{\argmin}{arg\,min}
\newcommand{\fgc}{\textsc{FairGreedyCapture}}
\newcommand{\afgc}{\textsc{Augmented-FairGreedyCapture}}
\newtheorem{theorem}{Theorem}
\newtheorem{lemma}{Lemma}
\newtheorem{proposition}[theorem]{Proposition}
\newtheorem{definition}[theorem]{Definition}
\newtheorem{example}[theorem]{Example}
\theoremstyle{remark}
\newcommand{\unifsel}{\textsc{Uniform}}
\newcommand{\adult}{\textrm{Adult}}
\newcommand{\ess}{\textrm{ESS}}
\renewcommand{\cite}{\citep}
\title{Boosting Sortition via Proportional Representation}
\author{\textbf{Soroush Ebadian$^1$, Evi Micha$^{2}$}}
\date{\large $^1$University of Toronto\hspace{.3cm}$^2$Harvard University}
\begin{document}

\maketitle

\begin{abstract}
Sortition  is based on the idea of choosing randomly selected representatives for decision making. The main properties that make sortition particularly appealing are {\em fairness} --- all the citizens can be selected with the same probability--- and {\em proportional representation} --- a randomly selected panel  probably   reflects the composition of the whole population.  When a population lies on a representation metric, we formally define proportional representation  by using a  notion  called the {\em core}. A panel is in the core if no group of individuals is underrepresented proportional to its size. While uniform selection is fair, it does not always return panels that are in the core. Thus, we ask if we can design a selection algorithm that satisfies fairness and  {\em ex post core} simultaneously.  We answer this question affirmatively and present an efficient selection algorithm that is fair and provides a constant-factor approximation to the optimal ex post core. Moreover, we show that uniformly random selection  satisfies a constant-factor approximation to the optimal {\em ex ante core}.   We complement our theoretical results by conducting experiments with real data.
\end{abstract}

\section{Introduction}

In the last centuries, representative democracy has become synonymous with elections. However, this has not been the case throughout history. Since ancient Athens, the random selection of representatives from a given population has been proposed as a means of promoting democracy and equality~~\cite{Rey16}. Sortition has gained significant popularity in recent years, mainly because of its use for forming {\em citizens' assemblies}, where a randomly selected panel of individuals deliberates on issues and makes recommendations. Currently, citizens' assemblies are being implemented by more than 40 organizations in over 25 countries~\cite{FGGH+21}.

Recently, there has been a growing interest within the computer science research community in designing algorithms that select representative panels fairly and transparently~\cite{FGGP20,FGGH+21,FKP21,SKMPS22}. Admittedly, a straightforward method for selecting a representative panel of size $k$ from a given population of size $n$ is to randomly select $k$ individuals uniformly~\cite{Eng89}. We refer to this simple procedure as uniform selection. As highlighted by~\citet{FGGP20}, two main reasons make this method particularly appealing:

\begin{enumerate}
    \item  {\em Fairness}: Each citizen is included in the panel with the same probability, satisfying the requirement of equal participation. Specifically, each citizen is selected with a probability of $k/n$
    
    \item {\em Proportional Representation}: The selected panel is likely to mirror the structure of the population, since if $x\%$ of the population has specific characteristics, then in expectation, $x\%$ of the panel will consist of individuals with these characteristics. For instance, if the female share of the population is $48\%$, then in expectation, $48\%$ of the panel will be females.
\end{enumerate}

Indeed, uniform selection seems to achieve   proportional representation {\em ex ante} (before the randomness is realized), since in expectation the selected panel reflects the composition of the population, especially when the  size of the panel is very large. However, one of the critiques of this sampling procedure is that with non-zero probability,  a  panel that  completely excludes certain demographic groups can be selected~\cite{Eng89}.   For example,  if the population is split evenly between college-educated and non-college-educated individuals,  there's a chance that uniform selection could result in a panel consisting solely of college-educated individuals.   To address such extreme cases, various strategies have been proposed to ensure proportional representation \textit{ex post} (after the randomness is realized)~\cite{martin1999random}. 

One common strategy is the use of {\em stratified sampling}~\cite{gkasiorowska2023sortition}. The idea is that the individuals are partitioned into disjoint groups and then a proportional number of representatives is sampled uniformly at random from each group. For example, if the population is comprised of 49\% college-educated individuals and 51\% non-college-educated individuals, then we can choose 49\% of the representatives from the first group and the remaining representatives from the other group. This idea can be extended  to ensure proportional representation across intersectional features as well. For instance, in a population characterized by the level of education and the income, we can define four groups: college-educated low-income, college-educated high-income, non-college-educated low-income, and non-college-educated high-income and then sample from each group separately. However, this approach becomes impractical when dealing with a large predefined set of features, as the number of possible groups can grow exponentially, and there may not be enough seats in the panel to represent all of them. A more general  approach, extensively used in practice, is to set quotas over individual or set of features~\cite{FGGP20,vergne2018citizens}. Similar to stratified sampling, when aiming for proportional representation across all intersectional features, the number of quotas can become exponential, making it infeasible to satisfy all of them concurrently. Alternatively, one may opt for setting quotas over a subset of intersectional features. For instance, quotas could be set for gender and race simultaneously, along with additional quotas for income. However, this might not ensure the representation of specific subgroups, such as high-income black women.

  The presence of the above challenges in existing strategies prompts a need for alternative approaches for ensuring proportional representation.  This, in turn, highlights the necessity of rigorously defining proportional representation first. Our work departs from these observations, and we aim to address the following questions:
    \begin{enumerate}
        \item  {\em What is a formal definition of proportional representation of a population? }
        \item {\em To what extent does uniform selection satisfy proportional representation?}
        \item {\em Is it possible to design selection algorithms that enhance representation guarantees while maintaining fairness? }
    \end{enumerate}

\subsection{Our approach}

\paragraph{Proportional Representation via Core.}  We  begin by tackling the first question posed above. Intuitively, a panel can be deemed proportionally representative if each group of size $s$ within a population of $n$ individuals is represented by $s/n \cdot k$ members in the panel, out of the total $k$ representatives selected. Motivated by this intuition,  we   borrow  a notion of proportional representation used by   recent  works on multiwinner elections, fair allocation of public goods and clustering~\cite{aziz2017justified, fain2018fair, CFSV19,cheng2020group,chen2019proportionally}, called the {\em core}.  The main idea of the core is: {\em  Every subset $S$ of the population  is entitled to choose up to $|S|/n \cdot k$ representatives}.   Formally,
a panel $P$ is called proportionally representative, or is said to be in the core,  if  there does not exist a subset   $S$  of the population that could choose a  panel $P'$, with  $|P'|\leq |S|/n\cdot k$, under which all of them feel more represented.  Note that this notion is not defined  over predefined groups using particular features, but it provides proportional representation in the panel to {\em every } subset of the population.

\paragraph{Representation Metric Space.}   A conceptual challenge is to quantify the extent to which a panel represents  an individual. To address this, we use the same approach as taken by \citet{SKMPS22} in which it is  assumed that the individuals lie in an underlying {\em representation metric space}.  The representation metric space can be constructed as a function of features that are of particular interest for an application  at hand, such as   gender, age,
ethnicity and education. 
Intuitively, the construction of such a metric space eliminates the necessity of partitioning individuals into groups that all share exactly the same characteristics. Instead, it serves as a means of detecting large  groups of individuals that share similar characteristics and are eligible to be represented proportionally. For example, a 30-year-old single, low-income black woman might still feel close to  a 35-year-old married, medium-income black woman, since they share many characteristics, even if they differ in some of them.

\paragraph{$q$-Cost.}  Finally, to measure the degree to which an individual is represented by a panel again, we take the  approach of \citet{SKMPS22}, following a recent work of \citet{CNV22} in multiwinner elections.
Specifically, the cost of an individual for a panel is determined by her distance from the $q$-th closest member in the panel, for some $q\in [k]$. We find this choice of cost suitable for applications related to sortition due to two main reasons. First, an individual may not care about her distance to all the representatives, but she may wish to ensure that there are a few with whom she can relate. For example, a woman may want to ensure that there are at least a few women on a panel to represent her, without necessarily requiring the entire panel to be composed of females, which would not be reasonable. Second, it effectively differentiates between panels containing representatives whom an individual can readily relate to and panels where representatives are more distant from her. For instance, consider an individual aged 40,  a panel that includes 2 representatives aged 40, one representative aged 20, and one representative aged 60 and another panel consisting of two representatives aged 30 and two representatives aged 50. The individual may feel represented by at least two people in the former panel,  and   therefore for $q=2$ her cost would be low. While for the second panel her cost would be higher since no representative is that close to her. In contrast, natural alternatives such as the average distance would fail to capture this difference since both panels would have the same average distance from her. The choice of $q$ depends on the application at hand. However, in this work, we provide selection algorithms that do not require knowledge of the value of $q$ but  offer guarantees for any value of it concurrently.

\subsection{Our Contribution}
Our primary conceptual contribution lies in introducing the core,  in the context of sortition. Before delving into our work, we discuss the relevant literature that has provided inspiration and insights for our research.
The idea of using the core as a means of measuring the proportional representation that a panel provides to a population, lying in a metric space, was first introduced by~\citet{chen2019proportionally} in a clustering setting. 
In our terms, \citet{chen2019proportionally} consider the case of $q = 1$, i.e., each individual cares for her distance from her closest representative, while in this work, we extend the notion of core to the class of $q$-cost functions.
They show that a solution in the core is not guaranteed to exist and define a multiplicative approximation of it with respect to the cost improvement of all individuals eligible to choose a different panel. They introduce an algorithm, called Greedy Capture, that returns a solution in the $(1+\sqrt{2})$-approximate core. Roughly speaking, the algorithm partitions the $n$ individuals into $k$ parts by smoothly increasing balls in the underlying metric space around each individual and greedily creating a part whenever a ball captures $n/k$ individuals that have not already been captured. The centers of the balls serve as the representatives.

In a sortition setting,in addition to proportional representation of all groups, it is important to ensure the {\em fairness constraint} which is that all individuals have  the same chance of being included in the panel. For ensuring that, a selection algorithm should return distribution over panels of size $k$, and not a deterministic panel as in the clustering setting.  Therefore, in this work we ask for selection algorithms that are simultaneously in the {\em ex post core}, meaning that  {\em every }   panel that the algorithm might return, is in the core, and simultaneously  is fair, meaning that each individual is included in the panel with probability equal to $k/n$.

In~\Cref{sec:fair-greedy-capture}, as one would expect, we demonstrate that uniform selection, despite satisfying fairness by its definition, falls short of achieving any reasonable approximation to the ex post core for almost any $q$, with only exception being $q=k$. This is  due to the fact that when $q=k$, any panel inherently belongs to the $2$-approximate ex post core, as we will show  later. We then pose the question: Is there any selection algorithm that is fair and achieves an $O(1)$-approximation to the ex post core? The answer is affirmative. We introduce an efficient selection algorithm, denoted as $\fgc$, that is fair and is in the $6$-approximate ex post core for \textit{every} value of $q \in [k]$. In some sense, this guarantees the best of both worlds, as we provide an algorithm that preserves the positive characteristic of uniform selection, namely fairness, and additionally, it ensures that any realized panel is in the $O(1)$-approximate ex post core. Again loosely speaking, 
 $\fgc$  creates $k$ parts using  Greedy Capture which  ``opens'' a ball in the metric space when a  sufficiently number of   individuals fall into it.  In contrast to Greedy Capture, which selects the center of the ball as a representative, $\fgc$ assigns probabilities of selection to individuals within the ball, ensuring that the sum of these probabilities equals to $1$.  This ensures the selection of one representative from each ball. Additionally, to ensure fairness, a total fraction of $k/n$ is assigned to each individual across the $k$ balls. Then, leveraging Birkhoff’s decomposition algorithm, we find a distribution over panels of size $k$, where each panel contains at least one representative from each ball, and each individual is selected with a probability of $k/n$. 
 We complement this result by showing that no fair selection  algorithm provides an approximation better than $2$ to the ex post core.

In~\Cref{sec:uniform-selection}, we turn our attention to the question: Is uniform selection in the ex ante core? 
As previously mentioned, uniform selection seems to satisfy the ex ante core, at least for large panels, since, in expectation, a panel is proportionally representative. Here, we investigate whether this is true for all values of $k$ and $q$. In particular, we define a selection algorithm to be in the {\em ex ante core} if, for any panel $P$, the expected number of individuals who feel more represented by $P$ than panels chosen from the selection algorithms is less than $|P|/n \cdot k$. This indicates that no other panel receives significant support, in expectation. First, we show that for $q=k$, uniform selection is in the ex ante core. However, for $q < k$, no fair selection algorithm is in the ex ante core. Therefore, as before, we define a multiplicative approximation with respect to the cost improvement. We demonstrate that uniform selection provides an approximation of $4$ to the ex ante core. On the other hand, we show that no fair selection algorithm provides an approximation better than $2$ to the ex ante core.

In \Cref{sec:audit}, we explore the question of whether, given a panel $P$, there is any way to determine if it satisfies an approximation of the ex post core for a  value of $q$. This  can be useful when a panel has been sampled using a selection algorithm that does not provide any guarantees for the ex post core. We  show that given a panel $P$, we can  approximate, in polynomial time, how much it violates the core up to constants.

Finally, in~\Cref{sec:experiments}, we empirically evaluate the approximation of uniform selection and $\fgc$  to the ex post core  on constructed metrics derived from two demographic datasets. We notice that for large values of $q$,  uniform selection achieves an approximation to the ex post core similar to the one that $\fgc$  achieves. For smaller values of $q$, when the individuals form  cohesive parts, uniform selection has unbounded approximation very often. However, when the individuals are well spread in the space,  uniform selection  achieves 
a good approximation of  the ex post core. Thus, the decision of using uniform selection depends on the value of $q$ and the structure of the population. 

\subsection{Related Work}

\citet{SKMPS22} recently considered the same question of measuring the representation that a panel or a selection algorithm achieves in a rigorous way. As we mentioned above, they also assume the existence of a representative metric space and use the distance of the $q$-th closest representative in the panel to measure to what degree a panel represents an individual. However, they use the social cost (i.e. the sum of individual costs) to measure how much a panel represents the whole population. In~\Cref{app:social-cost}, we show that this measure of representation may fail to achieve the idea of proportional representation. Moreover, while a reasonable approximation of their notion of representation is, in some cases, incompatible with fairness (i.e., each individual is included in the panel with the same probability), in this work, we show that there are selection algorithms that achieve a constant approximation of proportional representation and fairness simultaneously.  

As we discussed above,  a method that is used in practice for enforcing representation is by setting quotas over features.  However, a problem that appears  is that only a few people volunteer to participate in a decision panel. As a result, the   representatives are selected from a pool of volunteers which usually does not reflect the composition of the population, since for example highly educated people are usually more willing to participate in a decision panel than less educated people. \citet{FGGH+21} proposed selection algorithms that, given a biased pool of volunteers, find distributions that   maximize the minimum selection probability of any volunteer over panels that satisfy the desired quotas. 
In this work,  similar to \citet{SKMPS22} and \citet{BGP19}, we focus on the pivotal idea of a sortition based democracy that relies on sampling representatives directly from the underlying population~\cite{GW19}.  However, later, we discuss how our approach can be modified for being applied in biased pools of volunteers.  \citet{BGP19}   focused on the idea of  stratified sampling and asked how this strategy may affect the variance of the representation of unknown groups.    \citet{FKP21} studied how the selection algorithms can become transparent as well. In a more recent work, \citet{Flanigan-strategic-2024} studied the manipulability of different selection algorithms, i.e the incentives of individuals to misreport their features.

The representation of individuals as having an ideal point in a  metric space has its roots to the spatial model of voting~\cite{Arr90, Ene84}. As we mentioned above, the idea of using the core as a notion of proportional representation in a metric space  was first introduced by~\citet{chen2019proportionally}, and later revisited by~\citet{micha2020proportionally}, in a clustering setting. Proportional representation in clustering has also been studied by \citet{aziz2023proportionally} and \citet{Kalayci2024}. The definition by \citet{aziz2023proportionally} is quite similar to the core, with the basic difference being that each dense group explicitly requires a sufficient number of representatives.
\citet{Kalayci2024} consider a version of the core where an agent's cost for the panel is the sum of the distance of each representative, and  a group is incentivized to deviate to another solution if the overall group can reduce the sum of costs.
A drawback of both the definition of the core we use in this paper and Greedy Capture, which was mentioned by \citet{aziz2023proportionally} and \citet{Kalayci2024}, is that a dense group might end up being represented by just one individual. This happens because Greedy Capture keeps expanding opened balls, and when a new individual is captured by such a ball, it disregards it by implicitly assuming that this individual is already represented. We stress that while our notion of the core does not explicitly account for this problem, $\fgc$ does not expand balls that are already open, and thus, it does not suffer from this weakness.
More broadly, the implicitly goal of clustering is to find a set of $k$ centers that represent all the data points in an underlying metric space. As discussed by \citet{chen2019proportionally} in their work, the classic objectives, namely $k$-center, $k$-means, and $k$-median objectives, are deemed incompatible with the core. Consequently, they do not align with the notion of proportional representation desired in this work. The literature has explored various notions of fairness in clustering~\cite{chhabra2021overview}. Recently, \citet{kellerhals2023proportional} establish links among the numerous concepts related to fairness and proportionality in clustering.

Proportional representation through core  has been extensively studied in the context of multiwinner elections as well~\cite{aziz2017justified, faliszewski2017multiwinner, lackner2023multi, fain2018fair}. 
The problem of selecting a representative panel can be framed as a committee election problem, where the candidates are drawn from the same pool as the voters.  While in these works, the voters and the candidates do not lie in a metric space, but instead the  voters hold rankings over candidates, in our model, the rankings could derive from the underlying metric space.  
Due to impossibility results~\cite{cheng2020group}, relaxations of the core have been studied. The ex ante core, as defined here, was introduced by~\citet{cheng2020group}. They show that, without the fairness constraint,  the ex ante core can be guaranteed. In this work,  we show that by imposing this fairness constraint, an approximation to the ex ante $q$-core  better than $2$ is impossible, for all $q\in [k-1]$.

\section{Preliminaries}

For $t \in \mathbb{N}$, let $[t]=\{1,\ldots, t\}$. We denote the population by $[n]$. A panel $P$ is defined as a subset of the population. The $n$ individuals lie in an underlying {\em representation metric space} with distance function $d$. The distance between individuals $i$ and $j$ is denoted as $d(i,j)$. We assume that the distances are symmetric, i.e., $d(i,j)=d(j,i)$, and satisfy the triangle inequality, i.e., $d(i,j)\leq d(i,\ell)+d(\ell,j)$. An instance of our problem is characterized by the individuals in the population and the distances among them. Henceforth, we simply refer to such an instance as $d$.

 We consider a class of cost functions to measure the cost of an individual $i$ within a panel $P$. For $q \in [k]$, we define the \emph{$q$-cost} of $i$ for $P$ as the distance to her $q$-th closest member in the panel, denoted by $c_{q}(i,P;d)$. When $q=1$, the cost of an individual is equal to her distance from her closest representative in the panel, and for $q=k$, the cost is equal to her distance from her furthest representative in the panel. We denote by $\top_q(i,P;d)$ the set of the $q$ closest representatives of $i$ in a panel $P$ (with ties broken arbitrarily). Additionally, $B(i,r;d)$ represents the set of individuals captured from a ball centered at $i$ with a radius of $r$, i.e., $B(i,r;d)=\{i'\in [n]: d(i,i')\leq r\}$. We may omit $d$ from the notation when clear from the context.

A {\em selection algorithm}, denoted by  $\A_{k}$, is   parameterized by $k$ and  takes as input the metric $d$ and outputs a distribution over all panels of size $k$.  We say that  a panel is  in the support of $\A_{k}$, if it is implemented with positive probability under the distribution that $\A_{k}$ outputs. We pay special attention to the \emph{uniform selection} algorithm, denoted by $\sort$, that always outputs a uniform distribution over all the subsets of the population of size $k$.

\textbf{Fairness.}~
As mentioned  above, one of the appealing properties of uniform selection is that each individual is included in the panel with the same probability. We call this property {\em fairness} and we say that a selection algorithm   is  {\em fair}  if:
\begin{align*}
  \forall i \in [n],  \quad \Pr\nolimits_{P \sim\A_{k}}[i\in P ]=k/n.
\end{align*}

\textbf{Core.}~
Another appealing  property of sortition is  proportional representation. Here, we utilize  the idea of the core to measure the proportional representation of a panel and, by extension, of a selection algorithm. To do so, we first introduce the following definition: 
 For $\alpha \ge 1$, the {\em $\alpha$-$q$-preference count} of $P$ with respect to $P'$ is the number of individuals  whose $q$-cost under $P$ is larger than $\alpha$ times their $q$-cost under $P'$:
\begin{align*}
    \CV( P,P',\alpha)= \lvert \{i\in [n] : c_q(i,P)>\alpha \cdot c_q(i,P')\} \rvert. 
\end{align*}
A panel $P$ is {\em in the $\alpha$-$q$-core}, if for any panel $P'$, $ \CV(P,P', \alpha)<  |P'|\cdot n/k$. For $\alpha=1$, we say that the panel is in the $q$-core. 
We define $\alpha$-$q$-core for $\alpha>1$, since  even when $q=1$, a panel in the exact $q$-core is not guaranteed to exist~\citep{chen2019proportionally,micha2020proportionally}.

\paragraph{Ex Post $q$-Core.}
A selection algorithm $\A_{k}$ is in the {\em ex post  $\alpha$-$q$-core} (or  ex post $q$-core, for $\alpha=1$) if every panel $P$ in the support of  $\A_{k}$ is in the $\alpha$-$q$-core,  i.e., for all  $P$ drawn from $\A_{k}$ and all $P'$,
\[
\CV(P,P', \alpha)<  |P'|\cdot n/k.
\]

\paragraph{Ex Ante $q$-Core.}
A selection algorithm $\A_{k}$ is in the  {\em ex ante $\alpha$-$q$-core} (or  ex ante $q$-core, for  $\alpha=1$) if  for all $P'$:
\begin{align*}
    \E_{P \sim \A_{k}}[\CV(P,P',\alpha)]<  |P'|\cdot \frac{n}{k}.
\end{align*}

The idea of requiring a core-like property over the expected number of preference counts was introduced by~\citet{cheng2020group} in a multi-winner election setting. Essentially, it states that for any panel $P'$, if, for any realized panel $P$, we count the number of individuals that reduce their cost by a multiplicative factor of at least $\alpha$ under $P'$, in expectation, this number is less than $|P'|\cdot n/k$. Therefore, in expectation, they are not eligible to choose it.

 It is easy to see that ex post $\alpha$-core implies ex ante $\alpha$-core, since if for each $P$ in the support of a distribution that $\A_{k}$ returns and each $P'$, it holds that $\CV(P,P',\alpha)< |P'|\cdot n/k$, then  $\E_{P \sim \A_k}[\CV(P,P',\alpha)]<  |P'|\cdot n/k$.

\section{Fairness and Ex Post Core }\label{sec:fair-greedy-capture}

\begin{algorithm}[t] 
    \caption{$\fgc_{k}$}\label{alg:GC}
    \KwIn{ $[n]$,  $d$}
    \KwOut{  $P_{\ell}$ and $\lambda_{\ell}$, for $\ell \in [L]$,   where each $P_{\ell}$ represents a panel of size $k$ and $\lambda_{\ell}$ represents its probability of being selected}  
    \tcc{Create a  $(k/n)$-fractional allocation by distributing a $k/n$ fraction for each individual among $k$ balls, ensuring that each ball contains a total fractional amount equal to $1$.}
    $X\gets [0]^{k\times n}$;
    $\delta\gets 0$; $j\gets 1$; $\{y_i\gets k/n\}_{i \in [n]}$\;
    \While{$\sum_{i\in [n]} y_j>0$
    }{
            Smoothly increase $\delta$\; 
            \While{$\exists i\in [n]$,
            such that $ \sum_{i'\in B(i,\delta)} y_{i'}\geq 1$}{
                \While{$X_j=\sum_{i\in [n]} X_{j,i}<1$}{
                    Pick $i'\in B(i,\delta)$ with $x_{i'}>0$\;
                    $X_{j,i'}\gets \min(1-X_j, y_i)$\;
                    $y_i\gets y_i-X_{j, i'}$\;
                    }
                    $j\gets j+1$\;
                }
    }
    \tcc{Apply Birkhoff’s decomposition}
    $X' \gets [1/n]^{(n - k) \times n}$\;
    Let $Y = \left[
        \begin{array}{c}
        X \\
        X'
        \end{array}
    \right]$\;
    Compute a decomposition of $Y=\sum_{\ell=1}^L \lambda_{\ell} Y^{\ell}$ using the Birkhoff's decomposion (\Cref{theor:Birkhoff})\;
    \For{$\ell = 1$ to $L$}{$P_{\ell}\gets \big\{i \in [n]\mid Y^{\ell}_{j,i}=1 \text{ for some } j\leq k\big\}$}
    \Return{
    distribution over $L$ panels $\{P_\ell\}_{\ell \in [L]}$ where $P_\ell$ is selected with probability $\lambda_\ell$}
\end{algorithm}

In this section, we investigate if there are selection algorithms that are fair, and in addition,  provide a constant approximation to the ex post $q$-core. 
Unsurprisingly, uniform selection may fail to provide any bounded approximation to the ex post $q$-core for $q \in [k - 1]$~\footnote{ For $q = k$, we show in \Cref{app:US-ex-post-core} that all panels lie in the $2$ approximation of the $k$-core; hence, any algorithm including uniform selection provides an ex post $2$-$k$-core. }. This happens because each panel has a nonzero probability of selection, and there may exist panels with arbitrarily large violations of the $q$-core objective.

\begin{theorem}\label{thm:US-impos-ex-post-core}
For any $q \in [k-1]$ and  $\floor{\nicefrac{n}{k}}\ge k$,  there exists an instance such that  uniform selection is  not in the ex post $\alpha$-$q$-core for any bounded $\alpha$. 
\end{theorem}
\begin{proof}
    Consider an instance in which there are $\floor{\nicefrac{n}{k}}$ individuals  in group $A$ and the remaining individuals are in group $B$. Suppose that the distance between any two individuals in the same group is $0$, and the distance between any two individuals in different groups is $1$. Since, $\floor{\nicefrac{n}{k}}\geq k$, uniform selection has a non-zero probability of returning a panel where all the representatives are from group $A$. In this scenario, for any $q\in [k-1]$, the $q$-cost of all the individuals in group $B$ is equal to $1$. However, individuals in group $B$ are entitled to choose up to $k-1$ representatives among themselves, and if they do so, their $q$-cost becomes $0$, resulting in an unbounded improvement of their $q$-cost. Therefore, uniform selection is not in the ex post $\alpha$-$q$-core for any bounded $\alpha$. 
\end{proof}

Therefore, we ask: For every  $q$, is there any selection algorithm that keeps the fairness guarantee  of uniform selection  and ensures that every  panel in its support  is in the constant approximation of the $q$-core? We answer this positively.

We present  a selection algorithm, called $\fgc_{k}$, that is fair and  in the ex post $6$-$q$-core,  for {\em every} $q\in [k]$.  We highlight that the algorithm does not need to know the value of $q$. Our algorithm leverages  the basic idea of  the Greedy Capture algorithm  introduced by~\citet{chen2019proportionally}, which returns a panel in the {$(1 + \sqrt{2}) \approxeq 2.42$-approximation of the $1$-core. Note that this algorithm is deterministic and need not satisfy fairness. Briefly,  Greedy Capture starts with an empty panel and grows a ball around all individuals at the same rate. When a ball captures at least $\ceil{n/k}$ individuals for the first time, the center of the ball is included in the panel and all the captured individuals are disregarded. The algorithm keeps growing balls on all individuals, including the opened balls. As the opened balls continue to grow and capture more individuals, the newly captured ones are immediately disregarded as well. Note that the final panel can be of size less than $k$.

At a high level, $\fgc_{k}$, as outlined in \Cref{alg:GC}, operates as follows: it greedily opens $k$ balls using the basic idea of the Greedy Capture algorithm, ensuring each ball contains sufficiently many individuals. In contrast to Greedy Capture, which selects the centers of the balls as the representatives, our algorithm probabilistically selects precisely one individual from each of the $k$ balls.

Before, we describe the algorithm in more detail, we define a $(k/n)$-fractional allocation as a non-negative $k \times n$ matrix $X \in [0, 1]^{k \times n}$ where  entries in each row sums to $1$ and entries in each column sum to $k/n$, i.e., for each $i\in [n]$, $\sum_{j\in [k]}X_{j,i}=k/n$, and for each $j\in [k]$, $\sum_{i\in [n]}X_{j,i}=1$. 
The algorithm, during its execution, generates a \emph{$(k/n)$-fractional allocation} $X$ of individuals in $[n]$ into $k$ balls, where $X_{j,i}$ denotes the fraction of individual $i$ assigned to ball $j$. We say that an individual $i$ is assigned to ball $j$, if $X_{j,i}>0$. An individual can be assigned to more than one balls.

The $(k/n)$-fractional allocation $X$ is generated as follows. Denote the unallocated part of each individual $i$ by  $y_i$.  Start with $y_i = k/n$. This corresponds to the fairness criterion that we allocate a $k/n$ probability of selection to each individual.
\Cref{alg:GC}
grows a ball around every  individual in $[n]$ at the same rate. Suppose a ball captures individuals whose combined unallocated parts sum to at least $1$. Then, we open this ball  and from individuals $i'$ captured by this ball with $y_{i'} > 0$, we arbitrarily remove a total mass of exactly $1$ and assign it to the ball.  This can be done in various ways, e.g., greedily pick an individual  $i'$ with positive $y_{i'}$ and allocate $\min\{1 - \sum_{i \in [n]} X_{j, i}, y_{i'}\}$ fraction of it to the corresponding row (i.e. ball). This procedure terminates when the $k/n$ fraction of each individual is fully allocated.  Note that since each time a ball opens, a total mass of $1$ is deducted from $y_i$-s and, for each $i\in [n]$,   $y_i$ starts with a fraction of $k/n$,  exactly $k$ balls are opened.

\paragraph{Sampling panels from the $(k/n)$-fractional allocation.}  Next, we show a method of decomposing $X$, the $(k / n)$-fractional allocation, to a distribution over panels of size $k$ that each contain at least one representative from each ball.  We employ the  Birkhoff's decomposition \citep{birkhoff1946three}. This theorem applies over square matrices that are bistochastic. A matrix is bistochastic if every entry is nonnegative and the sum of elements in each of its rows and columns is equal to $1$. 

\begin{theorem}[Birkhoff-von Neumann]\label{theor:Birkhoff}
    Let $Y$ be a $n\times n$ bistochastic matrix. There exists a polynomial time algorithm that computes a decomposition  $Y = \sum_{\ell = 1}^L \lambda_\ell Y^{\ell}$, with $L\leq n^2-n+2$, such that for each $\ell \in [L]$, $\lambda_\ell \in [0,1]$, $Y^{\ell}$  is a permutation matrix and $ \sum_{\ell = 1}^L \lambda_\ell =1$.
\end{theorem}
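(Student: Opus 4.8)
The plan is to prove the stronger combinatorial statement that every bistochastic matrix is a convex combination of permutation matrices, and to read off both the algorithm and the bound on $L$ directly from a greedy ``extract-and-subtract'' procedure. The whole argument rests on a single structural fact: the support of any nonzero matrix $M$ with nonnegative entries and all row and column sums equal to a common value $s > 0$ must contain a permutation, i.e.\ there is a permutation $\sigma$ of $[n]$ with $M_{i,\sigma(i)} > 0$ for every $i$.

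First I would establish this structural fact via Hall's marriage theorem. Form the bipartite graph on rows and columns with an edge $(i,j)$ whenever $M_{i,j} > 0$. For any set $S$ of rows, let $N(S)$ be its neighborhood among the columns. Summing $M$ over the rows in $S$ gives total mass $s|S|$, and all of this mass sits in columns of $N(S)$; but the total mass carried by those columns is exactly $s|N(S)|$. Hence $s|S| \le s|N(S)|$, so $|S| \le |N(S)|$, which is precisely Hall's condition. A perfect matching therefore exists and yields the desired $\sigma$; crucially, it is computable in polynomial time by any bipartite-matching routine.

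Next I would describe the iteration. Given the current matrix $M$ (initially $Y$, for which $s=1$), find a permutation matrix $P$ in its support as above, set $\lambda = \min_{i} M_{i,\sigma(i)} > 0$, and replace $M$ by $M - \lambda P$. The new matrix is nonnegative by the choice of $\lambda$, and each of its row and column sums drops by exactly $\lambda$, so it again has constant line sums and the structural lemma applies afresh. Moreover at least one entry on the support of $P$ (the one attaining the minimum) becomes $0$, while no previously-zero entry can turn positive. Thus the number of strictly positive entries strictly decreases at every step. Recording each extracted pair $(\lambda, P)$ and iterating until $M$ is the zero matrix produces coefficients whose sum equals the original common line sum of $Y$, namely $1$, and permutation matrices whose convex combination is exactly $Y$.

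Finally I would bound $L$ and confirm polynomiality. Since $Y$ has at most $n^2$ positive entries and this count strictly decreases at each step until only a single scaled permutation (with exactly $n$ positive entries) remains to be extracted, the number of extracted permutations is at most $n^2 - n + 1 \le n^2 - n + 2$. Each iteration is one matching computation on a graph with at most $n^2$ edges, and there are at most $O(n^2)$ iterations, so the full decomposition runs in polynomial time. The step I expect to be the main obstacle is the bookkeeping on positive entries: one must argue carefully that the subtraction never creates a new positive entry and eliminates at least one, and that the process genuinely terminates at a single permutation rather than stalling---which is exactly where the structural lemma (being able to extract a permutation from \emph{any} nonzero matrix with constant line sums) must be invoked anew at every iteration.
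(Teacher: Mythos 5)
Your proof is correct. Note that the paper does not prove this statement at all --- it is the classical Birkhoff--von Neumann theorem, imported with a citation to \citet{birkhoff1946three} --- so there is no in-paper argument to compare against. Your route (Hall's condition verified by the mass-counting argument on row sets, followed by greedy extraction of a permutation with coefficient $\lambda = \min_i M_{i,\sigma(i)}$ and the observation that the number of positive entries strictly decreases) is the standard constructive proof, and your accounting is sound: since a nonzero matrix with constant positive line sums has at least $n$ positive entries, the first $L-1$ extractions bring the support from at most $n^2$ down to at least $n$, giving $L \le n^2 - n + 1$, which is in fact marginally sharper than the $n^2 - n + 2$ stated in the theorem. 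The termination worry you flag is handled exactly as you say: the structural lemma reapplies verbatim at every iteration because subtraction preserves nonnegativity and constant line sums, and no zero entry can become positive, so the process cannot stall.
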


We cannot directly apply the theorem above, since the $(k/n)$-fractional allocation $X$ is not bistochastic nor a square matrix. However, we can complete $X$ into a square matrix $Y = \left[\begin{array}{c}
     X \\
     X' 
\end{array}\right]$
by adding $n - k$ rows $X' = [1/n]^{(n - k) \times n}$ where all entries are $1/n$. Note that the resulting matrix $Y$ is bistochastic. Indeed, each row of both $X$ and $X'$ sums to $1$ by their definition; further, as each column of $X$ sums to $k/n$ and that it is followed by $n - k$ of $1/n$ entries in $X'$, the columns also sum to $1$. Note that there are various choices of $X'$ that makes $Y$ a bistochastic matrix, but here we use the uniform matrix for simplicity.
Then, the algorithm applies \Cref{theor:Birkhoff} and computes the decomposition $Y = \sum_{\ell = 1}^L \lambda_\ell Y^{\ell}$. For each permutation matrix $Y^{\ell}$, we create a panel $P_{\ell}$ consisting of the individuals that have been assigned to the first $k$ rows, i.e. $P_{\ell}$ contains all $i$-s with $Y^{\ell}_{j,i}=1$ for some $j\leq k$.  Finally, the algorithm returns the distribution that selects each panel $P_\ell$ with probability equal to $\lambda_\ell$.  

To prove that $\fgc_{k}$ is fair and   ex post $O(1)$-$q$-core, we need the next two  lemmas.

\begin{lemma} \label{lem:partition}
    Let $S\subseteq [n]$,  $P'$ be a panel, and $m = \floor{|P'| / q}$. 
    \begin{enumerate}
        \item 
    There exists a partitioning of $S$ into $m$ disjoint sets $T_1, \ldots, T_m$  and an individual $i^*_{\ell} \in T_\ell$ such that for all $\ell \in [m]$ and $i \in T_{\ell}$, $c_q(i,P') \leq c_q(i^*_{\ell},P')$ and $\top_q(i,P') \cap \top_q(i^*_{\ell},P') \neq \emptyset$.
        \item 
    There exists a partitioning of $S$ into $m$ disjoint sets $T_1, \ldots, T_m$  and an individual $i^*_{\ell} \in T_\ell$ such that for all $\ell \in [m]$ and $i \in T_{\ell}$, $c_q(i,P') \geq c_q(i^*_{\ell},P')$  and $\top_q(i,P') \cap \top_q(i^*_{\ell},P') \neq \emptyset$.
    \end{enumerate}
\end{lemma}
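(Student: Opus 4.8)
The plan is to prove both parts with the same greedy covering procedure, differing only in whether I repeatedly extract the uncovered individual of maximum or minimum $q$-cost. I will describe Part~1 (maximum) in detail; Part~2 is symmetric. Initialize all of $S$ as ``uncovered.'' At step $\ell$, let $i^*_\ell$ be an uncovered individual with the largest value of $c_q(\cdot,P')$, and let $T_\ell$ be the set of all currently uncovered individuals $i$ with $\top_q(i,P') \cap \top_q(i^*_\ell,P') \neq \emptyset$; mark these covered and repeat until nothing remains uncovered. By construction the $T_\ell$ are disjoint and cover $S$, each $i^*_\ell \in T_\ell$, and since $i^*_\ell$ had maximum cost among the then-uncovered individuals, every $i \in T_\ell$ satisfies both $c_q(i,P') \le c_q(i^*_\ell,P')$ and $\top_q(i,P') \cap \top_q(i^*_\ell,P') \neq \emptyset$, exactly as required.

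The crux is bounding the number of parts by $m = \floor{|P'|/q}$, and I expect this to be the main obstacle. I claim the chosen centers $i^*_1, i^*_2, \ldots$ have pairwise disjoint top-$q$ sets. Indeed, when $i^*_{\ell+1}$ is selected it is still uncovered, so it was placed in no earlier $T_{\ell'}$ with $\ell' \le \ell$; had $\top_q(i^*_{\ell+1},P')$ intersected $\top_q(i^*_{\ell'},P')$, the covering rule at step $\ell'$ would have swept $i^*_{\ell+1}$ into $T_{\ell'}$, a contradiction. Hence $\top_q(i^*_1,P'), \top_q(i^*_2,P'), \ldots$ are pairwise disjoint subsets of $P'$, each of size exactly $q$ when $|P'| \ge q$ (the only regime in which $m \ge 1$), so their number is at most $|P'|/q$ and, being an integer, at most $m$.

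If the procedure produces fewer than $m$ nonempty parts, I can reach exactly $m$ by splitting off singletons $\{i\}$, each trivially a valid part with $i^*_\ell = i$ (or, equivalently, read the statement as yielding at most $m$ parts). For Part~2 I run the identical procedure but take $i^*_\ell$ to be an uncovered individual of \emph{minimum} $q$-cost; the disjointness argument and hence the bound $m$ on the number of parts are unchanged, while now every $i \in T_\ell$ satisfies $c_q(i,P') \ge c_q(i^*_\ell,P')$. The single observation that unlocks everything is that ``uncovered'' individuals are precisely those whose top-$q$ set avoids all previously chosen centers, which forces the centers' top-$q$ sets to be disjoint and thereby caps the part count at $\floor{|P'|/q}$.
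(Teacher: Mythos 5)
Your proof is correct and follows essentially the same route as the paper's: the same greedy covering procedure (repeatedly extracting an extremal-cost uncovered individual and sweeping in everyone whose top-$q$ set meets its top-$q$ set), with the part count bounded by observing that the chosen centers' top-$q$ sets are pairwise disjoint subsets of $P'$ of size $q$. Your choice of maximizer for Part 1 and minimizer for Part 2 matches the $\argmax$/$\argmin$ formulas in the paper (whose accompanying prose has the words ``smallest''/``largest'' swapped), and your remark about padding with singletons to reach exactly $m$ parts cleanly handles a technicality the paper glosses over by writing $m \le \floor{|P'|/q}$.
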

\begin{proof}

We start by showing the first part. 
We partition all the individuals in $S$ into $m \le \floor{\nicefrac{|P'|}{q}}$ groups, denoted by $T_1,\ldots, T_{m}$ iteratively as follows.

Suppose $i^*_1$ is the individual with the smallest $q$-cost over $P'$ (ties are broken arbitrary), i.e. $i^*_1 = \argmax_{i\in S} c_q(i,P')$. Then, $T_1$ is the set of all the individuals whose $q$ closest representatives from $P'$ includes at least one member of $\top_q(i^*_1,P')$, i.e.
\begin{align*}
 T_1=\{i \in S:  \top_q(i,P') \cap \top_q(i^*_1,P') \neq \emptyset \}.
\end{align*} 
Next, from the remaining individuals, suppose $i^*_2$ is the one with the smallest $q$-cost over $P'$, i.e.  $i^*_2 = \argmin_{i \in S \setminus T_1} c_q(i, P)$. Construct $T_2$ from $S \setminus T_1$ similarly by taking all the individuals whose at least one of their $q$ closest representatives in $P'$ is included in $\top_q(i^*_2,P')$. We repeat this procedure, and in round $\ell$, we find $i^*_\ell \in S \setminus (\cup_{\ell'=1}^{\ell-1} T_{\ell'})$ that has the smallest cost over $P'$,  and construct $T_{\ell}$ by assigning any individual in $S \setminus (\cup_{\ell'=1}^{\ell-1} T_{\ell'})$ whose at least one of the $q$ closest representatives belongs in $\top_q(i^*_{\ell},P')$. Note that for any $\ell_1, \ell_2 \in [m]$ with $\ell_1<\ell_2$, $\top_q(i^*_{\ell_1},P') \cap \top_q(i^*_{\ell_2},P')=\emptyset $, as if  at least one of the  $q$ closest representatives of $i^*_{\ell_2}$ in $P$ is included in $\top_q(i^*_{\ell_1},P')$,  then  $i^*_{\ell_2}$ would have been assigned to $T_{\ell_1}$ and would not belong in $S \setminus (\cup_{\ell'=1}^{\ell_2-1} T_{\ell'})$. This means that in each round, we consider $q$ representatives that have not been considered before, and hence after $\floor{|P'|/q}$ rounds, less than $q$ representatives in $P'$ may remain unconsidered. As a result, after at most $\floor{|P'|/q}$ rounds,  all the   individuals will have  been assigned to some group, since at least one of their $q$ closest representatives has been considered.  

The second part follows by simply setting $i^*_{\ell}$ to be equal to the individual in  $ S \setminus (\cup_{\ell'=1}^{\ell-1} T_{\ell'})$ that has the largest cost over $P'$, i.e. $i^*_{\ell} = \argmin_{i \in  S \setminus (\cup_{\ell'=1}^{\ell-1} T_{\ell'})} c_q(i, P)$. All the remaining arguments remain the same. 
\end{proof}

\begin{lemma}\label{lem:cost-distance}
   For any panel $P$ and any $i,i'\in [n]$, it holds that  $
    c_q(i,P)\leq d(i,i')+c_q(i',P)   
    $.
\end{lemma}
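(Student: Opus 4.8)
The plan is to reduce this to a single application of the triangle inequality, exploiting the fact that the $q$-cost is an upper bound on a radius containing $q$ representatives. The key observation I would make first is a reformulation of the definition: since $c_q(i',P)$ is by definition the distance from $i'$ to its $q$-th closest member of $P$, every representative in $\top_q(i',P)$ lies within distance $c_q(i',P)$ of $i'$. Equivalently, the ball $B(i', c_q(i',P))$ contains at least $q$ members of $P$. Conversely, and this is the direction I will actually use, if some radius $r$ is such that $B(i,r)$ contains at least $q$ members of $P$, then $i$'s $q$-th closest representative is within distance $r$, so $c_q(i,P) \le r$.

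With this in hand, the main step is to transport the $q$ representatives that are close to $i'$ over to $i$. Concretely, I would fix the set $\top_q(i',P)$ and, for an arbitrary representative $p$ in it, apply the triangle inequality:
\[
d(i,p) \le d(i,i') + d(i',p) \le d(i,i') + c_q(i',P),
\]
where the second inequality holds because $p \in \top_q(i',P)$ implies $d(i',p) \le c_q(i',P)$. Since this bound holds simultaneously for all $q$ representatives in $\top_q(i',P)$, the ball $B\big(i,\, d(i,i') + c_q(i',P)\big)$ contains at least these $q$ members of $P$.

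Finally, I would invoke the reformulation from the first paragraph with $r = d(i,i') + c_q(i',P)$: because $B(i,r)$ captures at least $q$ members of $P$, the $q$-th closest representative of $i$ is at distance at most $r$, which is exactly the claimed bound $c_q(i,P) \le d(i,i') + c_q(i',P)$.

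I do not anticipate a genuine obstacle here, as the statement is essentially a $1$-Lipschitz (in the first argument) property of the map $i \mapsto c_q(i,P)$ and follows directly from the metric axioms. The only point requiring a little care is making the counting argument precise: one must argue that having $q$ representatives within radius $r$ forces the $q$-th order statistic of the distances to be at most $r$, rather than merely asserting it. Handling ties (as flagged by the ``ties broken arbitrarily'' convention in the definition of $\top_q$) is harmless, since the argument only uses that $\top_q(i',P)$ has exactly $q$ elements each at distance at most $c_q(i',P)$ from $i'$.
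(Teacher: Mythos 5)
Your argument is correct and is essentially the paper's own proof: both consider the $q$ representatives within distance $c_q(i',P)$ of $i'$, push them to within $d(i,i')+c_q(i',P)$ of $i$ via the triangle inequality, and conclude that the $q$-th closest representative of $i$ lies within that radius. You merely spell out the ball-counting reformulation more explicitly than the paper does; no gap.
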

\begin{proof}
    Consider a ball centered at $i'$ with radius $c_q(i',P)$. This ball contains at least $q$ representatives of $P$. Hence, $ c_q(i,P)$ is less than or equal to the distance of $i$ to one of the $q$ representatives that are included in $B(i', c_q(i',P))$ which is at most $d(i,i')+c_q(i',P)$. 
\end{proof}

Now, we are ready to prove the next theorem.

\begin{theorem}\label{thm:ex-post-core}
For every $q$, $\fgc_{k}$ is fair and  in the ex post $6$-$q$-core. 
\end{theorem}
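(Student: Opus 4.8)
The plan is to prove the two properties separately. \textbf{Fairness} will follow almost directly from the construction, so I would dispatch it first. The only two facts needed are: (i) each panel $P_\ell$ contains exactly one individual assigned to each of the $k$ balls, since $Y^\ell$ is a permutation matrix and hence its first $k$ rows place their single $1$'s in $k$ distinct columns; and (ii) every permutation $Y^\ell$ appearing in the Birkhoff decomposition is supported on the support of $Y$, so $Y^\ell_{j,i}=1$ forces $X_{j,i}>0$ (the representative of ball $j$ is genuinely an individual the algorithm assigned to ball $j$). Given (i), for a fixed individual $i$ we have $\Pr_{P\sim\fgc_k}[i\in P]=\sum_{\ell}\lambda_\ell\sum_{j\le k}Y^\ell_{j,i}=\sum_{j\le k}Y_{j,i}=\sum_{j\le k}X_{j,i}=k/n$, using that the first $k$ rows of $Y$ are exactly $X$ and that each column of $X$ sums to $k/n$.

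For the \textbf{ex post core} I would argue by contrapositive: fix an arbitrary $P$ in the support and an arbitrary competing panel $P'$, let $S=\{i:c_q(i,P)>6\,c_q(i,P')\}$ be the deviating set, and aim to show $|S|<|P'|\cdot n/k$. I would apply the first part of \Cref{lem:partition} to partition $S$ into $m=\floor{|P'|/q}$ groups $T_1,\dots,T_m$ with centers $i^*_\ell$, where every $i\in T_\ell$ satisfies $c_q(i,P')\le c_q(i^*_\ell,P')=:r$ and shares a representative $p\in\top_q(i,P')\cap\top_q(i^*_\ell,P')$ with the center; the triangle inequality through $p$ then gives $d(i,i^*_\ell)\le d(i,p)+d(p,i^*_\ell)\le 2r$, so $T_\ell\subseteq B(i^*_\ell,2r)$. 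Since there are at most $|P'|/q$ groups, it suffices to show $|T_\ell|<q\cdot n/k$ for every $\ell$, as summing then yields $|S|<(|P'|/q)(qn/k)=|P'|\cdot n/k$.

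The heart of the argument is ruling out a large group. Suppose for contradiction $|T_\ell|\ge qn/k$. Because each individual starts with mass $k/n$, the initial unallocated mass inside $B(i^*_\ell,2r)$ is at least $q$. Once the algorithm's radius $\delta$ reaches $2r$, the inner loop guarantees $\sum_{i'\in B(i^*_\ell,2r)}y_{i'}<1$ (otherwise a ball centered at $i^*_\ell$ would open), so strictly more than $q-1$ units of mass were removed from $B(i^*_\ell,2r)$ by balls that opened at radius $\delta_j\le 2r$. As each ball removes at most $1$ unit from this region, at least $q$ distinct balls took positive mass from $B(i^*_\ell,2r)$. For any such ball, with center $z_j$ and the individual $i'\in B(i^*_\ell,2r)$ it captured, we have $d(z_j,i')\le\delta_j$, and by fact (ii) its representative $\rho_j\in P$ satisfies $d(z_j,\rho_j)\le\delta_j$; hence $d(i^*_\ell,\rho_j)\le d(i^*_\ell,i')+d(i',z_j)+d(z_j,\rho_j)\le 2r+2\delta_j\le 6r$. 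Thus $P$ contains at least $q$ representatives within distance $6r$ of $i^*_\ell$, giving $c_q(i^*_\ell,P)\le 6r$. But $i^*_\ell\in S$ means $c_q(i^*_\ell,P)>6\,c_q(i^*_\ell,P')=6r$, a contradiction.

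I expect the mass-counting step to be the main obstacle, since it is where the geometry of Greedy Capture interacts with the fractional allocation: one must argue carefully that at least $q$ \emph{distinct} balls serve the dense region $B(i^*_\ell,2r)$ by radius $2r$, and that each contributes a representative within $6r$ of $i^*_\ell$, accounting for the fact that balls may draw mass from individuals scattered across the space and that an individual's mass can be split among several balls. The remaining ingredients --- the triangle-inequality bound $d(i,i^*_\ell)\le 2r$ from \Cref{lem:partition}, the distance-propagation bound of \Cref{lem:cost-distance} if needed, and the pigeonhole over the $\floor{|P'|/q}$ groups --- are routine once this counting is pinned down.
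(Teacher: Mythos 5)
Your proposal is correct and follows essentially the same route as the paper's proof: fairness from the column sums of $X$ preserved under the Birkhoff decomposition, and for the core, the partition of \Cref{lem:partition} plus pigeonhole to isolate a group $T_\ell$ of mass at least $q$ inside $B(i^*_\ell, 2r)$, a mass-counting argument showing at least $q$ balls of radius at most $2r$ must drain that region before $\delta$ reaches $2r$, and a triangle-inequality chain placing each ball's representative within $6r$ of $i^*_\ell$. The only cosmetic differences are that you argue via the contrapositive and route the final distance bound through the ball centers ($2r+2\delta_j$) rather than through a $T_\ell$-member and the ball's diameter ($2r+4r$), which yields the same constant.
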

\begin{proof}
Seeing that the algorithm is fair is straightforward. For a matrix $A$, let $A[1:k,:]$ be the submatrix induced by keeping its first $k$ rows.  First, note that  for each panel $P^\ell$ we choose the individuals that have been assigned to   $Y^{\ell}[1:k,:]$ and  second, recall that  $Y[1:k,:]=X$. The fairness of the algorithm follows by the facts  that   $Y[1:k,:]=X=\sum_{\ell=1}^L \lambda_{\ell} Y^{\ell}[1:k,:]$ and  for each $i\in [n]$, $\sum_{j=1}^k X_{j,i}=k/n$.

 We proceed by showing that $\fgc_{k}$ is in the ex post $6$-$q$-core, for all $q \in [k]$.
First, note that if an individual $i$ is assigned to a ball $j$ in some $Y^{\ell}$, then we must have $X_{j, i} > 0$. Now, 
since each individual $i \in [n]$ is assigned to a ball $j \in [k]$ in the permutation, we get that  that  at least one individual is selected from each ball.

 Let $P$ be any panel that the  algorithm may return.  
 Suppose for contradiction that there exists a panel $P'$  such that $\CV(P,P',6)\geq |P'| \cdot n/k $. This means that there exists $S\subseteq[n]$, with $|S|\geq |P'|\cdot n/k$, such that: 
\begin{align}\label{eq:fgc-ex-post}
  \forall i \in S,  \quad \quad c_{q}(i,P)> 6 \cdot  c_{q}(i,P').
\end{align} 

Let $T_1,\ldots, T_m$ be a partition of $S$ with respect to $P'$, as given in the first part of~\Cref{lem:partition}. Since $m\leq \floor{|P'|/q}$ and $|S|\geq |P'|\cdot n/k$, we conclude that there exists a part, say $T_{\ell}$, that has size at least $ q \cdot n/k$. From~\Cref{lem:partition}, we know  that there exists $i^*_{\ell}\in T_{\ell}$ such that for each $i\in T_{\ell}$ it holds that   $c_q(i,P') \leq c_q(i^*_{\ell},P')$ and $\top_q(i,P') \cap \top_q(i^*_{\ell},P')\neq \emptyset$. Therefore, we can conclude that for each $i\in T_{\ell}$, $d(i^*_{\ell},i)\leq 2\cdot c_q(i^*_{\ell},P')$, as following: Pick an arbitrary  representative in $\top_q(i,P') \cap \top_q(i^*_{\ell},P')$ and denote it as $r_i$. Then,
\begin{align*}
    d(i, i^*_{\ell})\leq d(i,r_i)+ d(r_i, i^*_{\ell})\leq  c_{q}(i,P') +   c_{q}(i^*_{\ell},P')\leq  2\cdot c_{q}(i^*_{\ell},P'). 
\end{align*}
This implies that the ball centered at $i^*_{\ell}$ with a radius of $2\cdot c_{q}(i^*_{\ell},P')$ captures  all individuals in $T_{\ell}$.

Now, consider all the balls that $\fgc_{k}$ opens and contain individuals from $T_{\ell}$. Since $T_{\ell}\geq q\cdot n/k$ and each ball is assigned a total fraction of $1$, there are at least $q$ such balls. Next, we claim that least  $q$ of them have radius at most $2\cdot c_{q}(i^*_{\ell},P')$. Suppose for contradiction that  at most $q-1$ of them  have 
  radius at most $2 \cdot c_q(i^*_\ell, P')$.  This means that a total fraction of at least $1$ from individual in $T_{\ell}$ is assigned to balls with radius strictly larger than $2 \cdot c_q(i^*_\ell, P')$.  However, the ball centered at $i^*_{\ell}$ with radius  $2 \cdot c_q(i^*_\ell, P')$ would have captured this fraction, and therefore we reach a contradiction.

Next, denote with $B_1, \ldots, B_q$,  $q$ balls that are opened,  and each contain individuals from $T_{\ell}$ and have radius at most $2\cdot c_{q}(i^*_{\ell},P')$. Due to the definition of $\fgc_k$,  each panel  that is returned,  contains at least one representative from each ball. Therefore, each  ball $B_j$ contains at least one representative, denoted by $r_j$. Now, note that since each $B_j$ contains at least one individual from $T_{\ell}$, denoted by $i_j$, we have that
\begin{align*}
\forall j \in [q], \quad    d(i^*_{\ell}, r_j)\leq d(i^*_{\ell}, i_j)+ d(i_j,r_j)\leq 6\cdot c_{q}(i^*_{\ell},P'),
\end{align*}
where the first inequality follows from the triangle inequality and the last inequality follows from the facts that  for each $i\in T_{\ell}$,  $d(i^*_{\ell}, i)\leq 2\cdot c_{q}(i^*_{\ell},P')$, and  each $B_j$ has radius at most $2\cdot c_{q}(i^*_{\ell},P')$ and both $i_j$ and $r_j$ belong to this ball. Therefore, there are at least $q$ representatives in $P$ that have distance at  most $6\cdot c_{q}(i^*_{\ell},P')$ from $i^*_\ell$. But then,  $ c_{q}(i^*_{\ell},P)\leq 6\cdot c_{q}(i^*_{\ell},P')$ which is a contradiction with~\Cref{eq:fgc-ex-post}.  
\end{proof}

As we discussed above,  the ex post $\alpha$-$q$-core  implies the ex ante $\alpha$-$q$-core which means that $\fgc_{k}$ is also in the ex ante $6$-$q$-core for all $q\in [k]$. In the next section, we show that no fair algorithm provides an approximation better than $2$ to the ex ante $q$-cost, for any $q$. Therefore, we get that no fair selection algorithm provides an approximation better than $2$ to the ex post $q$-core either. This means that $\fgc_{k}$ is  optimal up to a  factor of $3$. 

\subsection{Ex Post Core and Quotas over Features}

In our introduction, we discussed a common approach used to ensure proportional representation, which involves setting quotas based on individual or groups of features. For instance, a quota might mandate that at least $45\%$ of representatives are female. While the concept of the core aims to achieve proportional representation across intersecting features, it may not guarantee the same across individual features. For instance, a panel comprising entirely men could still meet core criteria, even if the overall population is 50\% women. This raises the question of whether it's possible to achieve both types of representation to the degree that is possible. We argue that this is feasible and show how the core requirement can be translated into a set of quotas.

As showed above, $\fgc_k$ generates $k$ balls, with each individual assigned to one or more balls. The key condition for  achieving an ex post $O(1)$-$q$-core is to have at least one representative from each ball.  This condition can be transformed into quotas by introducing an additional feature, $b_i$, for each individual $i$, indicating the balls they belong to. Thus, $b_i$ can take values in $2^{[k]}\setminus \{\emptyset\}$, where $2^{[k]}$ represents the power set of $[k]$. We then can set quotas that require the panel to contain at least one representative $i$ that belongs in ball $j$, i.e. $j\subseteq b_i$,  for each $j\in [k]$.  In other words, we can think of each ball as a subpopulation from which we want to draw a representative. We can then utilize the methods proposed by~\citet{FGGH+21} to identify panels that meet these quotas, along with others as much as possible, while maximizing fairness. We also note that this translation allows for sampling from a biased pool of representatives using the algorithm of the aforementioned paper, as long as the characteristics of the global population are known and the balls can be constructed based on them.

\section{Uniform Selection and Ex Ante Core}\label{sec:uniform-selection}

We have already discussed that uniform selection fails to provide any reasonable approximation to the ex post $q$-core, for almost all values of $q$. However, as we mentioned in the introduction, it seems to satisfy the ex ante $q$-core, at least when $k$ is very large. In this section, we ask whether indeed uniform selection  satisfies a constant approximation of the ex ante $q$-core, in a rigorous way, for all values of $q$ and $k$. We show that  uniform selection is in the ex ante $4$-$q$-core, for every $q$.~\footnote{
In fact, for $q = k$, uniform selection is in the ex ante $k$-core (see \Cref{app:US-ex-ante-k}). 
The main reason is that, for $q = k$, it suffices to show that the grand coalition does not deviate ex-ante. Since each panel is selected with non-zero probability, the marginal probabilities of deviation is strictly less than one, and the ex ante $k$-core is satisfied.
 }

To show  this result, we use the following form of Chu–Vandermonde identity which we prove in~\Cref{append:CV-identity} for completeness.

\begin{lemma}[Chu–Vandermonde identity]\label{lem:chu-vandermonde}
For any  $n, k,$ and $r$,  with $0 \leq r \leq k \leq n$, it holds
\begin{align*}
\sum_{j=0}^n \binom{j}{r} \cdot \binom{n-j}{k-r} =\binom{n+1}{k+1}.
\end{align*} 
\end{lemma}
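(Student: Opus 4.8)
The plan is to prove the identity by a direct double-counting argument, interpreting both sides as counting $(k+1)$-element subsets of a ground set of size $n+1$. Concretely, I would take the ground set to be $\{0, 1, \ldots, n\}$, which has exactly $n+1$ elements, so that the right-hand side $\binom{n+1}{k+1}$ is by definition the number of its $(k+1)$-element subsets. The goal is then to partition these subsets according to a well-chosen statistic so that the block associated to a parameter value $j$ has size $\binom{j}{r}\binom{n-j}{k-r}$, whence summing over $j$ reproduces the left-hand side.

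For the partition, I would classify each $(k+1)$-subset by the value $j$ of its $(r+1)$-th smallest element; this is well defined precisely because $k+1 \geq r+1$ under the hypothesis $r \leq k$, so every such subset genuinely possesses an $(r+1)$-th smallest element. Fixing this value to be $j$, the subset must contain exactly $r$ elements strictly below $j$ — chosen from $\{0, \ldots, j-1\}$, a set of size $j$, in $\binom{j}{r}$ ways — together with the element $j$ itself, and exactly $(k+1)-(r+1) = k-r$ elements strictly above $j$ — chosen from $\{j+1, \ldots, n\}$, a set of size $n-j$, in $\binom{n-j}{k-r}$ ways. Multiplying and summing over all admissible $j$ yields $\sum_{j=0}^n \binom{j}{r}\binom{n-j}{k-r}$, which therefore equals $\binom{n+1}{k+1}$.

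The remaining points require only bookkeeping rather than new ideas. First, the summation index $j$ ranges over all of $\{0, \ldots, n\}$, and the standard convention $\binom{a}{b} = 0$ whenever $b > a$ or $b < 0$ automatically annihilates the terms with $j < r$ or $n-j < k-r$; these contribute nothing, so the endpoints are not overcounted. Second, I would observe that the map sending a subset to the triple (elements below $j$, the element $j$, elements above $j$) is a bijection onto the pairs counted by $\binom{j}{r}\binom{n-j}{k-r}$, which is immediate once the roles of the three blocks are fixed.

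I do not expect any genuine obstacle here, as the identity is elementary and the argument is self-contained; the only care needed is the boundary conventions just noted. As a fallback in case a purely algebraic derivation is preferred, I would keep in reserve the generating-function route: since $\sum_{j} \binom{j}{r} x^{j} = x^{r}/(1-x)^{r+1}$ and $\sum_{m} \binom{m}{k-r} x^{m} = x^{k-r}/(1-x)^{k-r+1}$, the left-hand side is the coefficient of $x^{n}$ in the Cauchy product $x^{k}/(1-x)^{k+2}$, and extracting that coefficient via $1/(1-x)^{k+2} = \sum_{t} \binom{t+k+1}{k+1} x^{t}$ gives $\binom{n+1}{k+1}$ at once.
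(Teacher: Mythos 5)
Your argument is correct and is essentially the same double-counting proof the paper gives: the paper classifies $(k+1)$-subsets of an $(n+1)$-element set by the identity of the $(r+1)$-th smallest chosen item (their index $i$ corresponds to your $j+1$), obtaining exactly the same product $\binom{j}{r}\binom{n-j}{k-r}$ per class. Your handling of the boundary terms via the convention $\binom{a}{b}=0$ for $b>a$ or $b<0$ matches what the paper leaves implicit.
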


Now, we are ready to prove the following theorem.
\begin{theorem}\label{theor-exp-core-sort}
For any $q$,   uniform selection is in the ex ante $4$-$q$-core, i.e. for any panel $P'$
\begin{align*}
\E_{P\sim \sort}\left[\CV(P,P',4)\right]< |P'| \cdot \frac{n}{k}.
\end{align*}
\end{theorem}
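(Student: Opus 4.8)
The plan is to expand the expectation by linearity and turn each individual's contribution into a hypergeometric tail. Writing $\rho_i=c_q(i,P')$, the event $c_q(i,P)>4\rho_i$ is exactly the event that the ball $B(i,4\rho_i)$ contains fewer than $q$ members of $P$. Since $P\sim\sort$ is a uniformly random $k$-subset of $[n]$, setting $b_i:=|B(i,4\rho_i)|$ gives
\[
\E_{P\sim\sort}[\CV(P,P',4)]=\sum_{i\in[n]}\Pr\big[\,|P\cap B(i,4\rho_i)|\le q-1\,\big]=\sum_{i\in[n]}\sum_{j=0}^{q-1}\frac{\binom{b_i}{j}\binom{n-b_i}{k-j}}{\binom{n}{k}},
\]
so the entire task reduces to lower bounding the ball sizes $b_i$ and then summing the resulting hypergeometric tails.

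First I would extract the geometric lower bound on $b_i$ that the triangle inequality forces. If individuals $i$ and $j$ share one of their $q$ closest $P'$-representatives, then by \Cref{lem:cost-distance} and the triangle inequality $d(i,j)\le \rho_i+\rho_j$, so every such $j$ with $\rho_j\le\rho_i$ lies inside $B(i,2\rho_i)\subseteq B(i,4\rho_i)$. Hence, once we order individuals by increasing cost $\rho$ and group together those that share representatives, each individual's ball contains all the lower-cost members of its group: this is what converts $b_i$ into a \emph{rank} and makes the tails decay. I expect the gap between this bare factor $2$ and the stated factor $4$ to be the slack spent when members are related through a common leader (as in \Cref{lem:partition}) rather than through a directly shared representative.

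The budgeting is where \Cref{lem:partition} enters and fixes the constant on the right-hand side. The aim is to bound the contribution of each of the $m\le\lfloor|P'|/q\rfloor$ groups (each tied to a disjoint block of $q$ representatives of $P'$) by $q\cdot\binom{n+1}{k+1}/\binom{n}{k}$, so that summing over groups yields $\lfloor|P'|/q\rfloor\cdot q\cdot\tfrac{n+1}{k+1}\le |P'|\cdot\tfrac{n+1}{k+1}<|P'|\cdot\tfrac nk$, where the last strict inequality holds because $k<n$. The per-group estimate, after sorting by cost and inserting the rank lower bound on $b_i$, reduces to controlling partial sums of the form $\sum_{s}\binom{s}{j}\binom{n-s}{k-j}$; here \Cref{lem:chu-vandermonde} is exactly the tool, since $\sum_{s=0}^{n}\binom{s}{j}\binom{n-s}{k-j}=\binom{n+1}{k+1}$ bounds every partial sum by $\binom{n+1}{k+1}$, and summing over $j=0,\ldots,q-1$ produces the factor $q\binom{n+1}{k+1}$.

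The main obstacle I anticipate is the bookkeeping that reconciles three quantities at once: the factor $4$ in the deviation threshold, the $q$ representatives each individual draws on, and the $\lfloor|P'|/q\rfloor$ grouping. Grouping by a single shared representative makes the rank argument clean but creates up to $|P'|$ groups, overshooting the budget by a factor of $q$; grouping through the leader of \Cref{lem:partition} gives the correct number of groups but loosens the pairwise bound to $d(i,j)\le \rho_i+\rho_j+2c_q(i^*_\ell,P')$, which no longer certifies that all lower-cost members of a group sit in $B(i,4\rho_i)$. Producing a lower bound on each $b_i$ strong enough to drive the Chu–Vandermonde partial-sum telescoping while charging only $\lfloor|P'|/q\rfloor$ representative-blocks is the delicate step, and it is precisely there that the factor $4$ is consumed and that the strictness of the final bound (via $\tfrac{n+1}{k+1}<\tfrac nk$) is secured.
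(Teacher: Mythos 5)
Your skeleton matches the paper's proof exactly: linearity of expectation, the reduction of each individual's term to a hypergeometric tail $\sum_{r=0}^{q-1}\binom{b_i}{r}\binom{n-b_i}{k-r}/\binom{n}{k}$, the grouping into $m\le\lfloor|P'|/q\rfloor$ blocks via \Cref{lem:partition}, the Chu--Vandermonde summation of partial sums, and the closing arithmetic $m\cdot q\cdot\frac{n+1}{k+1}<|P'|\cdot\frac{n}{k}$ are all precisely the steps of the paper. However, you explicitly leave the decisive step unresolved --- the rank lower bound on $b_i$ that is compatible with charging only $\lfloor|P'|/q\rfloor$ groups --- and your diagnosis of why it is delicate is incorrect. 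The paper invokes the \emph{second} part of \Cref{lem:partition}, in which the leader $i^*_\ell$ has the \emph{smallest} $q$-cost in its group. Every member $i\in T_\ell$ then satisfies $d(i,i^*_\ell)\le c_q(i,P')+c_q(i^*_\ell,P')\le 2\,c_q(i,P')$ via the shared representative, and ordering $T_\ell$ by \emph{distance from the leader} (not by cost) gives, for $j'<j$, $d(i^\ell_j,i^\ell_{j'})\le 2\,d(i^\ell_j,i^*_\ell)\le 4\,c_q(i^\ell_j,P')$; hence $B\bigl(i^\ell_j,4\,c_q(i^\ell_j,P')\bigr)$ contains $i^\ell_1,\dots,i^\ell_j$, the tail bound applies with $b_{i^\ell_j}\ge j$, and the within-group sum telescopes into $\binom{n+1}{k+1}$ exactly as you planned.

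Moreover, the ``loosened'' pairwise bound you claim kills the leader-based grouping, $d(i,j)\le\rho_i+\rho_j+2\,c_q(i^*_\ell,P')$, in fact still certifies the containment you want: with the minimum-cost leader one has $c_q(i^*_\ell,P')\le\rho_j\le\rho_i$, so $d(i,j)\le\rho_i+3\rho_j\le 4\rho_i$, and ordering by cost within the group would also work. Either way the factor $4$ is consumed precisely here, and the step you flag as ``the main obstacle'' is a short triangle-inequality computation once you commit to the minimum-cost leader of \Cref{lem:partition}. As written, the proposal does not carry out that computation (and asserts, wrongly, that the natural route fails), so the argument is incomplete at its central point.
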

\begin{proof}
Let  $P'$ be any panel. By linearity of expectation, we have that
\begin{align*}
    \E_{P\sim \sort}\left[\CV(P,P',4)\right]= \sum_{i\in [n]} \; \Pr\nolimits_{P\sim \sort}\left[c_q(i,P)>4 \cdot c_q(i,P')\right]. 
\end{align*}
Let $T_1\ldots, T_m$ be a partition of $[n]$ with respect to $P'$, as given in the second part of~\Cref{lem:partition}. 
For each $\ell\in [m]$, we reorder the individuals in  $T_{\ell}$ in an increasing order  based on their distance from $i^*_{\ell}$, and  relabel them as $i^{\ell}_1,\ldots, i^{\ell}_{|T_{\ell}|}$. This way,  $i^{\ell}_1$ and $i^{\ell}_{|T_{\ell}|}$ are the individuals in $T_{\ell}$ that have the smallest and the largest distance from $i^*_{\ell}$, respectively. Then, we get that
\begin{equation}   
     \sum_{i\in [n]} \; \Pr\nolimits_{P\sim \sort} \left[c_q(i,P)>4 \cdot c_q(i,P')\right]
     =
     {\sum_{\ell=1}^m \sum_{j=1}^{|T_{\ell}|}}
     \; \Pr\nolimits_{P\sim \sort}
     \left[c_q(i_j^\ell,P)>4 \cdot c_q(i_j^{\ell},P')\right].\label{ineq:4-exp-appr-2} 
\end{equation}

In the next lemma, we  bound $\Pr_{P\sim \sort} \left[c_q(i_j^\ell,P)>4 \cdot c_q(i^\ell_j,P')\right]$ for each $i_j^\ell$.  

\begin{lemma}\label{lem:US-ex-ante}
   For each $\ell\in [m]$ and  $j\in [|T_{\ell}|]$,
   \[
   \Pr\nolimits_{P\sim \sort} \left[c_q(i_j^\ell,P)>4 \cdot c_q(i^\ell_j,P')\right]\leq
   {\sum_{r=0}^{q-1} \frac{1}{{\binom{n}{k}}}} \cdot \binom{j}{r} \binom{n-j}{k-r}.
   \]
\end{lemma}
\begin{proof}
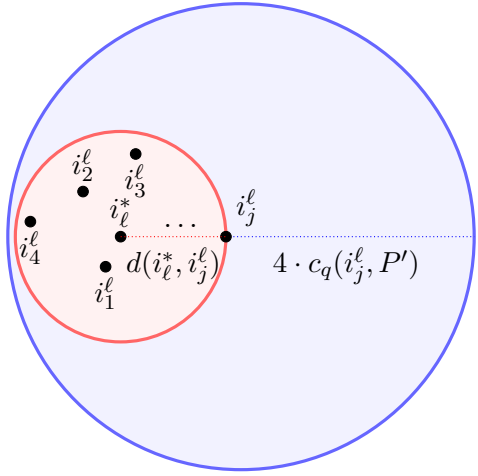
\begin{figure}[t]
\centering
\begin{tikzpicture}
 \filldraw[ color=blue!60, fill=blue!5, very thick](1.6,0) circle (3.1);

\draw[ densely dotted,blue] (1.4,0) -- (4.7,0);
\filldraw[black] (3,0) circle (0pt) node[anchor=north]{$4\cdot c_q(i^{\ell}_j,P')$};
\filldraw[color=red!60, fill=red!5, very thick](0,0) circle (1.4);
\filldraw[black] (0,0) circle (2pt) node[anchor=south]{$i^*_{\ell}$};
\filldraw[black] (1.4,0) circle (2pt) node[anchor=south west]{$i^{\ell}_j$};
\filldraw[black] (0.7,0) circle (0pt) node[anchor=north]{$d(i^*_{\ell},i^{\ell}_j)$};
\filldraw[black] (-0.2,-0.4) circle (2pt) node[anchor=north]{$i^{\ell}_1$};
\filldraw[black] (-0.5,0.6) circle (2pt) node[anchor=south]{$i^{\ell}_2$};
\filldraw[black] (0.2,1.1) circle (2pt) node[anchor=north]{$i^{\ell}_3$};
\filldraw[black] (-1.2,0.2) circle (2pt) node[anchor=north]{$i^{\ell}_4$};
\filldraw[black] (0.8,0.3) circle (0pt) node[anchor=north]{$\ldots$};
\draw[ densely dotted,red] (0,0) -- (1.4,0);

\end{tikzpicture}
\caption{Diagram for Proof of~\Cref{lem:US-ex-ante}  } \label{fig:exp-core-sort}
\end{figure}
For each $i_j^\ell$, let $r^{\ell}_j$ be  an arbitrary representative   in  $\top_{q}(i^\ell_j,P') \cap \top_{q}(i^*_\ell,P')$. Then, we get  that
\begin{equation} 
   d(i_j^\ell,i^*_{\ell})  \leq d(i_j^\ell,r^{\ell}_j ) + d(r^{\ell}_j, i^*_\ell )
    \leq  c_q(i^\ell_j,P') +  c_q(i^*_\ell,P')\leq  2 \cdot c_q(i^\ell_j,P'), \label{ineq:4-exp-appr-1} 
\end{equation}
where  the last inequality follows from the fact that  $i^*_{\ell}$ has the smallest cost over $P'$ among all the individuals in $T_{\ell}$. Now, consider the ball that is centered at $i_j^\ell$ and has radius $4 \cdot c_q(i^\ell_j,P')$. Note that this ball contains any  individual $i^\ell_{j'}$ with $j'<j$. Indeed, for each $i^\ell_j$ and $i^\ell_{j'}$ with $j'<j$, we have that
\begin{align*}
    d(i^\ell_j, i^\ell_{j'})\leq  d(i^\ell_j, i^*_\ell) + d( i^*_{\ell}, i^\ell_{j'}) \leq 2\cdot  d(i^\ell_j, i^*_\ell)\leq  4 \cdot c_q(i^\ell_j,P'),
\end{align*}
where   the second inequality follows form the fact that for each $j',j \in [|T_{\ell}|]$ with $j'<j$,  $d(i^\ell_{j'},i^*_\ell) \leq d(i^\ell_{j},i^*_\ell)$ and the last inequality follows form \Cref{ineq:4-exp-appr-1}.  This argument is drawn  in  \Cref{fig:exp-core-sort}. 

When $c_q(i_j^\ell,P)>4 \cdot c_q(i^\ell_j,P')$, then we get that $|P\cap \{ i^{\ell}_1,\ldots, i^{\ell}_{j} \}|<q$, as otherwise there would exist at least $q$ individuals in 
$B(i^\ell_j,4 \cdot c_q(i^\ell_j,P'))$, and $c_q(i_j^\ell,P)$ would be at most $4 \cdot c_q(i^\ell_j,P')$. Hence, we have that 
\begin{align*}
     \Pr_{P\sim \sort} [c_q(i_j^\ell,P)>4 \cdot c_q(i^\ell_j,P')] &\leq  \Pr_{P\sim \sort} \left[~\lvert P\cap \{ i^{\ell}_1,\ldots, i^{\ell}_{j} \}\rvert <q\right] \\
     &=\Pr_{P\sim \sort} \left[
     \bigcup_{r=0}^{q-1} \, \lvert  P\cap \{ i^{\ell}_1,\ldots, i^{\ell}_{j} \}\rvert = r \right] \\
    & \leq \sum_{r=0}^{q-1} \Pr_{P\sim \sort} \left[\,\lvert P\cap \{ i^{\ell}_1,\ldots, i^{\ell}_{j} \}\rvert = r\right]
    =
   {\sum_{r=0}^{q-1} \frac{1}{{\binom{n}{k}}}} \cdot \binom{j}{r} \binom{n-j}{k-r}.
\end{align*}
where the second inequality follows from the  Union Bound and the last equality follows form the fact that uniform selection chooses $k$ out of $n$ individuals uniformly at random.  
\end{proof}

Then, by returning to \Cref{ineq:4-exp-appr-2}
we get that, 
\begin{align*}
    \E_{P\sim \sort}[\CV(P,P',4)]&= \sum_{\ell=1}^m \; \sum_{j=1}^{|T_{\ell}|} \; \Pr\nolimits_{P\sim \sort}\left[c_q(i_j^\ell,P)>4 \cdot c_q(i_j,P')\right]
    \\
   &
   \leq 
   \frac{1}{{\binom{n}{k}}} \cdot 
   \sum_{\ell=1}^m \; \sum_{j=1}^{|T_{\ell}|}  \; \sum_{r=0}^{q-1} \; \binom{j}{r} \binom{n-j}{k-r}
   & \text{(by \Cref{lem:US-ex-ante})}
   \\
   &
   =
   \frac{1}{{\binom{n}{k}}} \cdot 
   \sum_{\ell=1}^m \; \sum_{r=0}^{q-1} \; \sum_{j=1}^{|T_{\ell}|}  \; \binom{j}{r} \binom{n-j}{k-r}
   & \text{(swap summations)}
   \\
   &
   \leq 
   \frac{1}{{\binom{n}{k}}} \cdot 
   \sum_{\ell=1}^m \; \sum_{r=0}^{q-1} \; \sum_{j=0}^{n} \binom{j}{r} \binom{n-j}{k-r}
   & (|T_{\ell}| \le n)
   \\
   &
   =
   \frac{1}{{\binom{n}{k}}} \cdot 
   \sum_{\ell=1}^m \; \sum_{r=0}^{q-1}  \; \binom{n+1}{k+1}
   &
   (\text{by \Cref{lem:chu-vandermonde}})
   \\
   &=
   \sum_{\ell=1}^m \; \sum_{r=0}^{q-1}  \frac{n+1}{k+1}= m\cdot q \cdot \frac{n+1}{k+1} < |P'| \cdot \frac{n}{k},
\end{align*}
where the last inequality follows from the facts that $m \le \floor{ \frac{|P'|}{q}}$ and ${\frac{n+1}{k+1} < \frac{n}{k}}$ for $k<n$. 
\end{proof}

In the next theorem, we show that for any  $q<k$, no selection algorithm that is fair, is guaranteed  to achieve ex ante $\alpha$-$q$-core with $\alpha<2$, and hence uniform selection is  optimal up to a factor of $2$. 
\begin{theorem}\label{thm:lower-bound}
    For any $q\in [k-1]$, when $n\geq 2k^2/(k-q)$, there exists an instance such that no selection algorithm that is fair,  is in the ex ante  $\alpha$-$q$-core with $\alpha<2$.
\end{theorem}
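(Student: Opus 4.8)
The plan is to exhibit a single metric (independent of $\alpha$) on which fairness, all by itself, forces every algorithm to under‑serve almost everyone. I would take $q$ ``hub'' individuals forming a set $A$, let $B=[n]\setminus A$ be the remaining $n-q$ individuals, and set $d(i,j)=2$ if $i,j\in B$ with $i\neq j$, and $d(i,j)=1$ for every other pair $i\neq j$. This is the shortest‑path metric of a diameter‑$2$ graph, hence a genuine metric: every hub individual is within distance $1$ of all $n$ points, while two distinct members of $B$ are at distance $2$. The deviating panel will be $P'=A$, so $|P'|=q$ and the inequality to be violated is $\E_{P\sim\A_k}[\CV(P,P',\alpha)]\ge |P'|\cdot n/k=qn/k$.

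First I reduce the counting event to the single statistic $m=|P\cap A|\in\{0,\dots,q\}$. For any $b\in B$ all $q$ members of $P'$ lie at distance $1$, so $c_q(b,P')=1$; hence for every $\alpha<2$ a member $b\in B$ is counted in $\CV(P,P',\alpha)$ as soon as $c_q(b,P)=2$. The only individuals within distance $1$ of such a $b$ are $b$ itself and the hub $A$, so $c_q(b,P)=2$ exactly when $|P\cap A|+\mathbbm 1[b\in P]<q$. (Hub individuals are within distance $1$ of everyone, so they always have $q$‑cost at most $1$ and never contribute.) A short case analysis on $m$ then gives $\#\{b\in B:c_q(b,P)=2\}=n-q$ when $m\le q-2$, equals $n-k-1$ when $m=q-1$, and $0$ when $m=q$; calling this quantity $g(m)$, we have $\CV(P,P',\alpha)\ge g(m)$ pointwise.

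Fairness enters only through the mean of $m$: since each hub individual is selected with probability $k/n$, $\E[m]=qk/n$. The crux is that the \emph{only} freedom the algorithm retains is the law of $m$ on $\{0,\dots,q\}$ subject to this fixed mean, and I must show that $\E[g(m)]$ cannot be made too small. A one‑line linear‑programming argument (the objective is linear in the law, and the feasible region is cut out by $\sum\Pr=1$ together with the mean) shows the minimizer is the two‑point law putting mass $k/n$ on $m=q$ and $1-k/n$ on $m=0$ — intuitively, the cheapest way to ``escape'' is to select all of $A$ simultaneously, which is exactly compatible with each hub having marginal $k/n$. This yields, for every fair $\A_k$, $\E_{P}[\CV(P,P',\alpha)]\ge\E[g(m)]\ge (n-q)(1-k/n)=(n-q)(n-k)/n$.

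It then remains to check $(n-q)(n-k)/n\ge qn/k$, i.e.\ $\phi(n):=(k-q)n^2-k(k+q)n+k^2q\ge0$. Since $k>q$, $\phi$ is an upward parabola whose vertex lies at $k(k+q)/(2(k-q))\le 2k^2/(k-q)$, and $\phi\!\left(2k^2/(k-q)\right)=k^2(2k+q)>0$; hence $\phi(n)\ge0$ for all $n\ge 2k^2/(k-q)$, so the target $qn/k$ is met for every $\alpha<2$ and no fair algorithm lies in the ex ante $\alpha$‑$q$‑core. The degenerate case $q=1$ (where $A$ is a single point) is handled the same way and is easier, since the target $n/k$ is small. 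I expect the main obstacle to be exactly the third step: a crude Markov bound on $\Pr[|P\cap A|\ge q]$ is too lossy when $q$ is close to $k$, because then the window between $qn/k$ and $n$ is narrow, so one must pin down the worst‑case fair law of $|P\cap A|$ precisely — crucially counting the members of $B$ that land \emph{inside} $P$, not only those outside it — in order to make the constant $2k^2/(k-q)$ come out.
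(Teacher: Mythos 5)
Your construction is essentially the paper's: a two-distance hub metric in which the $q$ hubs form the deviating panel $P'$, and fairness caps the probability mass any algorithm can place on panels containing the hubs. The difference is in the counting. The paper sets the intra-hub distances to $0$, conditions on the single event $i_1\notin P$ (probability $1-k/n$ by fairness), and observes that in that event at least $n-k$ individuals are counted --- all of $I$ unboundedly, plus everyone outside $I\cup P$ by a factor of $2$ --- so $\E[\CV(P,P',\alpha)]\ge(1-k/n)(n-k)\ge qn/k$. Your LP over the law of $m=|P\cap A|$ is a more refined route to essentially the same quantity, and for $q\ge2$ it is correct: the point $(q-1,\,n-k-1)$ lies on or above the segment from $(0,\,n-q)$ to $(q,\,0)$ exactly when $n\ge qk/(q-1)$, which follows from $n\ge 2k^2/(k-q)\ge 2k$, so the lower convex envelope at the mean $qk/n$ is indeed $(n-q)(1-k/n)$, and your quadratic $\phi$ correctly closes the arithmetic.

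The $q=1$ case, however, is not ``handled the same way'' and as written it fails. There $m=0$ falls into your case $m=q-1$, so $g(0)=n-k-1$ rather than $n-q$, and the unique feasible law on $\{0,1\}$ with mean $k/n$ gives only $\E[g(m)]=(n-k)(n-k-1)/n$, which drops below the target $n/k$ at the boundary of your hypothesis: for $k=2$, $q=1$, $n=8=2k^2/(k-q)$ you would need $2\cdot 6\cdot 5\ge 64$, which is false. The repair is precisely the contribution your parenthetical discards: for $q=1$ the lone hub has $c_1(a,P')=0$, so it is counted (an unbounded improvement) whenever $a\notin P$, adding $(1-k/n)$ in expectation and restoring $\E[\CV(P,P',\alpha)]\ge(n-k)^2/n\ge n/k$. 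Setting the intra-hub distance to $0$, as the paper does, makes the hubs contribute for every $q$ and lets you dispense with the LP altogether.
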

\begin{proof}
    Consider a star graph with $n - q$ leaves and an internal node. Suppose $q$ individuals $I = \{i_1,\ldots, i_q\}$ lie on the internal node, and exactly one individual lies on each of the $n - q$ leaves. Individuals in $I$ have a distance of $0$ from each other and a distance of $1$ from $[n] \setminus I$; and, the distance between a pair of individuals from $[n] \setminus I$ is equal to $2$. These distances satisfy the triangle inequality. 
    
    Let $P$ be an arbitrary panel of size $k$ that does not contain $i_1$. We show that for $P'=I$ and every $\alpha<2$, we have that
    $
        \CV(P,P',\alpha)\ge n-k.
    $
    For any $i \in I$, it holds $c_q(i,P)=1$ and 
    $c_q(i,P')=0$ --- which is an unbounded improvement. For any individual $i$ in $[n] \setminus (I \cup P)$,  $c_q(i,P)=2$ since their $q$th closest representative in $P$ would be on another leaf, while $c_q(i,P')=1$ --- which yields a $2$ factor improvement. Therefore, $\CV(P,P',\alpha)\ge |([n] \setminus (I \cup P)) \cup I| \ge n - |P| =  n-k$, for every $\alpha<2$.
    
    Under any fair selection algorithm, $i_1$ is not included in the panel with probability $1-\nicefrac{k}{n}$. Thus, we have that
    \begin{align*}
         \E_{P \sim \sort} [\CV(P,P',\alpha)] \ge \Pr_{P \sim \sort} [i_1 \notin P] \cdot (n - k) = (1-k/n) \cdot (n-k) \geq q\cdot n/k= |P'| \cdot n/k,
    \end{align*}
    where the last inequality follows from the assumption that $n\geq 2k^2/(k-q)$.  
\end{proof}

\section{Auditing Ex Post Core}
\label{sec:audit}

\begin{algorithm}[t]
\caption{Auditing Algorithm}\label{alg:auditing}
\KwIn{$P$,  $[n]$,  $d$, $k$, $q$, }
\KwOut{$\hat{\alpha}$}
\For{$j \in [n]$}{
    $\hP_{j}\gets$ $\{j\} \cup q-1$ closest neighbors of $j$;\\
    $\hat{\alpha}_j\gets$ the $\ceil{q\cdot n/k}$ largest value among $\{c_q(i,P)/ c_q(i,\hP_j)\}_{i\in [n]}$;\label{audit_alg_line:ratio}\\
}  
\Return{$\hat{\alpha}\gets\argmax_{j\in [n]}\hat{\alpha}_j$}
\end{algorithm}
In this section, we turn our attention to the following question: Given a  panel $P$, how much does it violate the $q$-core, i.e. what is the maximum value of $\alpha$ such that there exists a panel $P'$ with $\CV(P,P',\alpha)\geq |P'| \cdot n/k$? This auditing question can be very useful in practice for measuring the proportional representation of a panel formed using a method that does not guarantee any panel to be in the approximate core, such as uniform selection.

\citet{chen2019proportionally} ask the same question for the case where the cost of an individual for a panel is equal to her distance form her closest representative in the panel, i.e. when $q=1$.  In this case, it suffices to restrict our attention to panels of  size $1$, which are subsets of the population that individuals may prefer to be represented by. In other words, given a panel $P$, we can simply consider every individual as a potential representative and check if a sufficiently large subset of the population prefers this individual to be their representative over $P$. Thus, we can find the maximum  $\alpha$ such that there exists $P'$, with $\CV(P,P',\alpha)\geq n/k$ as following: For each $j\in [n]$,  calculate $\alpha_j$ which is equal  to  the $\ceil{n/k}$ largest value among the set $\{c_q(i,P)/c_q(i,\{j\})\}_{i \in [n]}$ containing  the $q$-cost ratios of $P$ to $\hP$.  Then, $\alpha$ is equal to the maximum value among all $\alpha_j$'s.

For $q>1$, this question is more challenging. We show the possibility of approximating the value of the maximum $\alpha$, by generalizing the above procedure as following: For each $j\in [n]$, let $\hP_j$ be the panel that contains $j$ and its  $q-1$ closest neighbors. Then, calculate $\hat{\alpha}_j$ as  the $\ceil{q \cdot n/k}$ largest value of among the set $\{c_q(i,P)/c_q(i,\hP_j)\}_{i \in [n]}$.  Then, we return  the maximum value among all $\hat{\alpha}_j$'s as $\hat{\alpha}$.
\Cref{alg:auditing} executes this procedure.  We show that the maximum  $\alpha$  such that there exists a panel $P'$ with $\CV(P,P',\alpha)\geq |P'| \cdot n/k$
is at most $3\cdot \hat{\alpha}+2$. 

\begin{restatable}{theorem}{auditpanel}
\label{thm:auditing}
There exists an efficient algorithm that for every panel $P$ and $q \in [k]$ returns $\hat{\alpha}$-$q$-core violation that satisfies $\hat{\alpha} \le \alpha \le 3\hat{\alpha} + 2$, where $\alpha$ is the maximum amount of $q$-core violation of $P$.
\end{restatable}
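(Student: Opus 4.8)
The plan is to prove the two inequalities separately. The lower bound $\hat\alpha \le \alpha$ is essentially a certificate argument: each panel $\hP_j$ built by \Cref{alg:auditing} is itself a legitimate deviating panel, so the best $\hat\alpha_j$ cannot overstate the true violation. The upper bound $\alpha \le 3\hat\alpha + 2$ is the substantive half: I must show that an \emph{arbitrary} optimal deviation $P'$ (of any size) can be ``localized'' to one of the $n$ candidate panels $\hP_j$ while losing only a constant factor.

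For the lower bound, fix $j^\star = \argmax_j \hat\alpha_j$, so $\hat\alpha = \hat\alpha_{j^\star}$ and $|\hP_{j^\star}| = q$. Since $\hat\alpha$ is by construction the $\ceil{qn/k}$-th largest ratio $c_q(i,P)/c_q(i,\hP_{j^\star})$, at least $\ceil{qn/k}\ge q\cdot n/k = |\hP_{j^\star}|\cdot n/k$ individuals satisfy $c_q(i,P) \ge \hat\alpha\cdot c_q(i,\hP_{j^\star})$. Hence for every $\alpha' < \hat\alpha$ the panel $\hP_{j^\star}$ witnesses $\CV(P,\hP_{j^\star},\alpha')\ge |\hP_{j^\star}|\cdot n/k$, and letting $\alpha'\uparrow\hat\alpha$ gives $\alpha\ge\hat\alpha$. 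For the upper bound I would start from an optimal witness: a panel $P'$, which we may assume has $|P'|\ge q$ (smaller panels make every $q$-cost infinite and witness nothing), together with a set $S$, $|S|\ge|P'|\cdot n/k$, of individuals with $c_q(i,P) > \alpha\cdot c_q(i,P')$. Applying the first part of \Cref{lem:partition} splits $S$ into $m \le \floor{|P'|/q}$ groups $T_1,\dots,T_m$ with representatives $i^\star_\ell$, and pigeonhole yields a group with $|T_\ell|\ge qn/k$, hence $|T_\ell|\ge\ceil{qn/k}$. The crux is to certify a large ratio for the candidate panel $\hP_{i^\star_\ell}$ on \emph{every} $i\in T_\ell$, via three facts: (i) the shared $q$-closest representative guaranteed by \Cref{lem:partition} plus $c_q(i,P')\le c_q(i^\star_\ell,P')$ give, by the triangle inequality, $d(i,i^\star_\ell)\le 2c_q(i^\star_\ell,P')$; (ii) because $\hP_{i^\star_\ell}$ is exactly the $q$ globally closest points to $i^\star_\ell$, it minimizes the $q$-cost of $i^\star_\ell$, so $c_q(i^\star_\ell,\hP_{i^\star_\ell})\le c_q(i^\star_\ell,P')$, whence \Cref{lem:cost-distance} gives $c_q(i,\hP_{i^\star_\ell})\le d(i,i^\star_\ell)+c_q(i^\star_\ell,\hP_{i^\star_\ell})\le 3c_q(i^\star_\ell,P')$; and (iii) \Cref{lem:cost-distance} applied to $P$ with $i^\star_\ell\in S$ gives $c_q(i,P)\ge c_q(i^\star_\ell,P)-d(i,i^\star_\ell)> (\alpha-2)c_q(i^\star_\ell,P')$. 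Dividing, $c_q(i,P)/c_q(i,\hP_{i^\star_\ell}) \ge (\alpha-2)/3$ for all $i\in T_\ell$; since $|T_\ell|\ge\ceil{qn/k}$ this forces $\hat\alpha\ge\hat\alpha_{i^\star_\ell}\ge(\alpha-2)/3$, i.e.\ $\alpha\le 3\hat\alpha+2$.

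The main obstacle, and the step I would justify most carefully, is fact (ii): the inequality $c_q(i^\star_\ell,\hP_{i^\star_\ell})\le c_q(i^\star_\ell,P')$ is exactly what allows restricting the exponentially many possible deviations to the $n$ nearest-neighbor candidates at only a constant loss, and it genuinely uses that $\hP_{i^\star_\ell}$ is the $q$ \emph{nearest} points rather than an arbitrary size-$q$ set. I would also handle the degenerate case $c_q(i^\star_\ell,P')=0$ separately (there the relevant ratios are infinite and both $\alpha$ and $\hat\alpha$ are unbounded, so the bound holds vacuously), and justify the reduction to $|P'|\ge q$ so that all $q$-costs are finite and the partition step applies. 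Efficiency is immediate: \Cref{alg:auditing} performs $n$ outer iterations, each computing a nearest-neighbor set and $n$ cost ratios followed by an order-statistic selection, so it runs in polynomial time.
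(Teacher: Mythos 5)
Your proposal is correct and follows essentially the same route as the paper's proof: the same candidate panels $\hP_j$, the same use of the first part of \Cref{lem:partition} with the pigeonhole step $|T_\ell|\ge q\cdot n/k$, the bound $d(i,i^*_\ell)\le 2c_q(i^*_\ell,P')$, the key observation $c_q(i^*_\ell,\hP_{i^*_\ell})\le c_q(i^*_\ell,P')$, and the same triangle-inequality chain yielding the constant $3\hat\alpha+2$ --- you merely phrase it as a direct contrapositive (every $i\in T_\ell$ has ratio at least $(\alpha-2)/3$ against $\hP_{i^*_\ell}$) where the paper argues by contradiction via a single witness $i'\in T_\ell$ with a small ratio. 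Your explicit treatment of the easy direction $\hat\alpha\le\alpha$ and of the degenerate cases ($|P'|<q$, zero costs) is a welcome addition that the paper leaves implicit.
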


\begin{proof}
    Suppose for contradiction that while the algorithm returns $\hat{\alpha}$,   there exists  $S\subseteq [n]$ and $P'\subseteq [n]$, with $|S|\geq |P'| \cdot n/k$,  such that 
          \begin{align*}
              \forall i \in S,  \quad \quad c_{q}(i,P)> (3 \cdot \hat{\alpha}+2) \cdot  c_{q}(i,P').
          \end{align*} 
   
    First, note that  if the algorithm outputs $\hat{\alpha}$, this means that for every  $\alpha'>\hat{\alpha}$ and $j\in [n]$, it holds that 
    \begin{equation} \label{ineq:size-PS}
        \CV(P,\hP_{j},\alpha')<|\hP_{j}|\cdot n/k, 
    \end{equation}
    as otherwise the algorithm would output a value strictly larger than $\hat{\alpha}$.  

  Let $T_1,\ldots, T_m$ be a partition of $S$ with respect to $P'$, as given in the first part of~\Cref{lem:partition}. Since $m\leq \floor{|P'|/q}$ and $|S|\geq |P'|\cdot n/k$, we conclude that there exists a part, say $T_{\ell}$, that has size at least $ q \cdot n/k$.  Moreover, since there exists $i^*_{\ell}\in T_{\ell}$ such that for each   $i\in T_{\ell}$, it holds that $c_q(i,P') \leq c_q(i^*_{\ell},P')$ and $\top_q(i,P') \cap \top_q(i^*_{\ell},P')\neq \emptyset $, we can conclude that $d(i,i^*_{\ell})\leq 2\cdot c_q(i^*_{\ell}, P')$, by considering a representative in $\top_q(i,P') \cap \top_q(i^*_{\ell},P')$ and applying the triangle inequality,  i.e. $d(i,i^*_{\ell})\leq d(i,r_i)+d(r_i,i^*_{\ell})\leq 2\cdot c_q(i^*_{\ell}, P')$, where $r_i\in \top_q(i,P') \cap \top_q(i^*_{\ell},P')$. 
    This means there exists a ball centered at $i^*_{\ell}$ that has radius $2\cdot c_{q}(i^*_{\ell},P')$ and captures all the individuals in $T_{\ell}$.      Now, note that there exists $i'\in T_{\ell}$ such that
    $\alpha' \cdot c_q(i',\hP_{i^*_{\ell}})> c_q(i',P)$, since otherwise for each $i\in T_{\ell}$ would hold that  $\alpha' \cdot c_q(i',\hP_{i^*_{\ell}})\leq c_q(i',P)$ and then $\CV(P,\hP_{i^*_{\ell}}, \alpha')\geq q\cdot n/k=|\hP_{i^*_{\ell}}|\cdot n/k $ which contradicts \Cref{ineq:size-PS}. Hence, 
    \begin{align*}
    c_{q}(i^*_{\ell},P)
    \leq d(i^*_{\ell},i') +   c_q(i', P) 
  < d(i^*_{\ell},i') + \alpha'\cdot  c_q(i',\hP_{i^*_{\ell}})  
    &\leq d(i^*_{\ell},i')  + \alpha' \cdot (d(i^*_{\ell},i')+   c_q(i^*_{\ell},\hP_{i^*_{\ell}})\\
    &\leq   (3\cdot \alpha'+2)\cdot c_q(i^*_{\ell}, P').
    \end{align*}
where the first and the third inequalities follows from~\Cref{lem:cost-distance} and the last inequality follows from the facts that for each $i\in T_{\ell}$, $d(i,i^*_{\ell})\leq 2\cdot  c_{q}(i^*_{\ell},P')$, and  $ c_q(i^*_{\ell},\hP_{i^*_{\ell}})  \leq  c_q(i^*_{\ell}, P') $ for each $P'$ since $\hP_{i^*_{\ell}}$ consists of the $q$ closest neighbors of $i^*_{\ell}$. Therefore, $c_{q}(i^*_{\ell},P)  \leq   (3\cdot \hat{\alpha}+2)\cdot c_q(i^*_{\ell}, P')$ and the theorem follows.
\end{proof}

\section{Experiments}\label{sec:experiments}
In previous sections, we examined uniform selection from a worst-case perspective and found that it cannot guarantee panels in the core for any bounded approximation ratio. But, what about the average case? How much better is $\fgc$ than uniform selection in terms of their approximations to the ex post core in the average case? In this section, we aim to address these questions through empirical evaluations of both algorithms using real databases.

\subsection{Datasets} In accordance with the methodology proposed by \citet{SKMPS22}, we utilize the same two datasets used by the authors as a proxy for constructing the underlying metric space. These datasets capture various characteristics of populations across multiple observable features. It is reasonable to assume that individuals feel closer to others who share similar characteristics. Therefore, we construct a random metric space using these datasets.

\paragraph{\adult{}.} The first is the \adult{} dataset, extracted from the 1994 Current Population Survey by the US Census Bureau and available on the UCI Machine Learning Repository under a CC BY 4.0 license \cite{kohavi1996adult,DG17}. Our analysis focuses on five demographic features: \texttt{sex},
\texttt{race}, \texttt{workclass}, \texttt{marital.status}, and \texttt{education.num}. The dataset is comprised of $32{,}561$ data points, each with a sample weight attribute (\texttt{fnlwgt}). We identify $1513$ unique data points by these features and treat the sum of the weights associated with each unique point as a distribution across them.

\paragraph{\ess{}.} The second dataset we analyze is the European Social Survey (ESS), available under a CC BY 4.0 license \cite{ESS-9}. Conducted biennially in Europe since 2001, the survey covers attitudes towards politics and society, social values, and well-being. We used the ESS Round 9 (2018) dataset, which has $46{,}276$ data points and $1451$ features across 28 countries. On average, each country has around $250$ features (after removing non-demographic and country-unrelated data), with country-specific data points ranging from $781$ to $2745$. Each ESS data point has a post-stratification weight (\texttt{pspwght}), which we use to represent the distribution of the data points. Our analysis focuses on the ESS data for the United Kingdom (\ess{}-UK), which includes 2204 data points.

\subsection{Representation Metric Construction}
In line with the work of \citet{SKMPS22}, we apply the same approach to generate synthetic metric preferences, which are used to measure the dissimilarity between individuals based on their feature values.
Our datasets consist of two types of features: \emph{categorical} features  (e.g. sex, race, and martial status) and \emph{continuous} features  (e.g. income).
We define the distance between individuals $i$ and $j$ with respect to feature $f$ as follows:
\[
d(i, j; f) \coloneqq
\begin{cases}
    \mathbbm{1}[f(i) \ne f(j)], & \text{$\quad$ if $f$ is a \emph{categorical} feature;}\\
    {\frac{1}{{\max_{i', j'} |f(i') - f(j')|}}} \cdot |f(i) - f(j)|, & \text{$\quad$ if $f$ is a \emph{continuous} feature,}
\end{cases}
\]
where the normalization factor for continuous features ensures that $d(i, j; f) \in [0, 1]$ for all $i$, $j$, and $f$, and that the distances in different features are comparable. Next, we define the distance between two individuals as the weighted sum of the distances over different features, i.e.
$
d(i, j) = \displaystyle\sum\nolimits_{{f \in F}} \, w_f \cdot d(i, j; f),
$
where the weights $w_f$'s are randomly generated. Each unique set of randomly generated feature weights results in a new representation metric.

We generate $100$ sets of randomly-assigned feature weights per dataset, calculate a representation metric for each set, and report the performance metrics averaged over  $100$ instances.
Given that our datasets are samples of a large population (i.e. millions) and represented through a relatively small number of unique data points (i.e. few thousands), we assume that each data point represents a group of at least $k$ people, which takes a maximum value of 40 in our study.
To empirically measure ex post core violation, for each of the $100$ instances, we sample one panel from an algorithm and compute the core violation using~\Cref{alg:auditing}. 
We note that this is not exactly equal to the worst-case core violation, but a very good approximation of it.

\subsection{Results}
\paragraph{Results for Ex Post Core Violation.}
In \adult{} dataset, we observe an unbounded ex post core violation for \unifsel{} when $q \le 4$. Specifically, for $q \in \{1, 2, 3\}$, we observed unbounded core violation in $84\%$, $9\%$, and $36\%$ of the instances respectively. This happens since ${\sim}8.3\%$ of the population is mapped to a single data point and that \unifsel{} fails to select $q$ individuals from this group. When $q \le 3$, we have $\nicefrac{q}{k} \le 8.4\%$, and this cohesive group is entitled to select at least $q$ members of the panel from themselves, which results in $q$-cost of $0$ for them and an unbounded violation of the core.  However, $\fgc$ captures this cohesive group and selects at least $q$ representatives from them. Furthermore, we see significantly higher ex post core violation for \unifsel{} compared to $\fgc$ for smaller values of $q$ (up to $12$) and comparable performance for larger values of $q$. This is expected as $\fgc$ tends to behave more similarly to \unifsel{} as $q$ increases because it selects from fewer yet larger groups ($\floor{\nicefrac{k}{q}} + 1$ groups of size $\nicefrac{qn}{k}$).

We observe a similar pattern in \ess{}-UK that \unifsel{} obtains worse ex post core violations when $q$ is smaller and similar performance as $\fgc$ for larger values of $q$. However, in contrast to \adult{}, we do not observe similar unbounded violations for \unifsel{} in \ess{}-UK. The reason is that \ess{}-UK consists of ~250 features (compared to the $5$ we used from \adult{}) and any data points represent at most $0.2\%$ of the population. Thus, no group is entitled to choose enough representatives from their own to significantly improve their cost or make it $0$. The decline in core violation for $q = k$ happens as it measures the minimum improvement in cost over the whole population, which is more demanding than lower values of $q$.
Lastly, $\fgc$ performs consistently for all values of $q$ and achieves an ex post core violation less than $1.6$ and $1.25$ in $\adult{}$ and \ess{}-UK respectively.

\begin{figure}[t!]
	\centering
	\begin{minipage}{\textwidth}
		\centering
		\begin{subfigure}[t]{0.4\textwidth}
			\includegraphics[width=\textwidth]{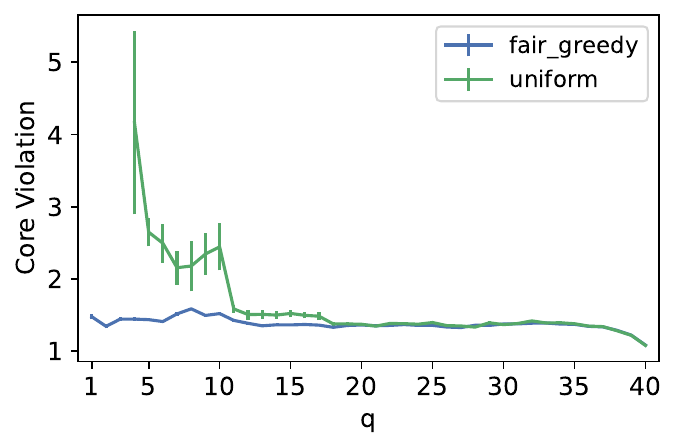}
			\caption{\adult{}}
			\label{fig:adult-ex-post}
		\end{subfigure}
		\hspace{0.0275\textwidth}
		\begin{subfigure}[t]{0.41\textwidth}
			\centering
			\includegraphics[width=\textwidth]{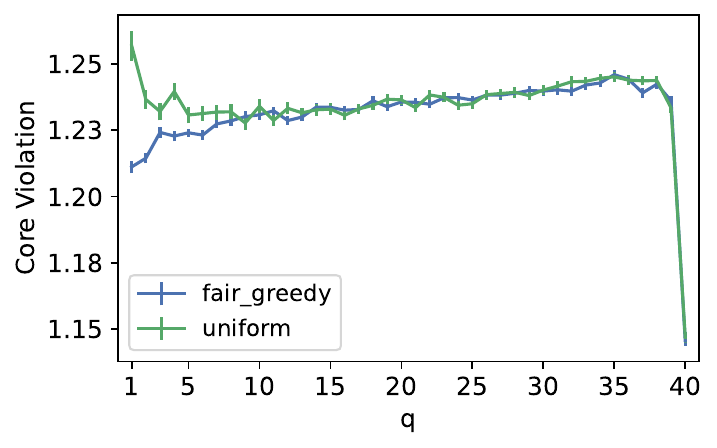}
			\caption{\ess{}-UK}
			\label{fig:ess-ex-post}
		\end{subfigure}
		\caption{Ex post core violation of $\fgc$ and \unifsel{} with $k = 40$}
		\label{fig:ex-post-core}
	\end{minipage}
\end{figure}

\begin{figure}[t!]
    \begin{minipage}{\textwidth}
    \centering
    \begin{subfigure}[t]{0.44\textwidth}
        \includegraphics[width=\textwidth]{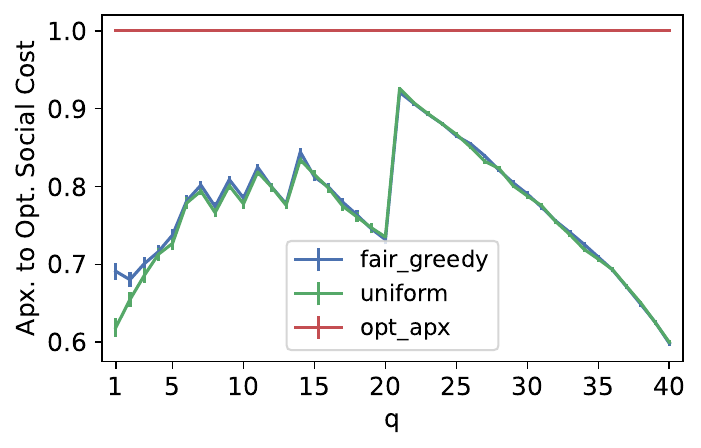}
        \caption{\adult{}}
        \label{fig:adult-repr}
    \end{subfigure}
    \hspace{0.04\textwidth}
    \begin{subfigure}[t]{0.44\textwidth}
        \includegraphics[width=\textwidth]{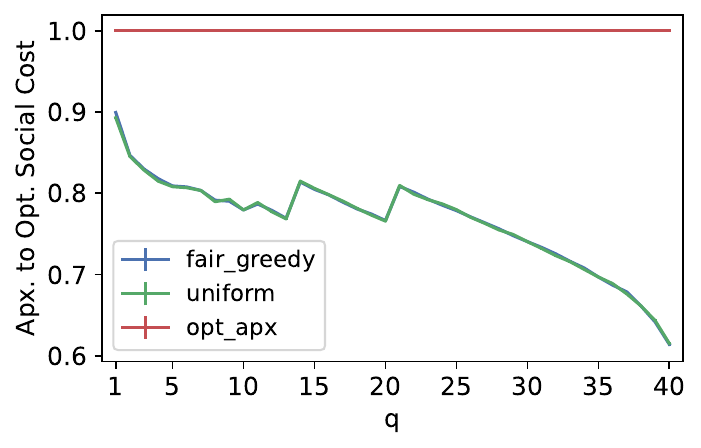}
        \caption{\ess{}-UK}
        \label{fig:ess-repr}
    \end{subfigure}
    \caption{Approximation to the optimal social cost of $\fgc$ and \unifsel{} with with $k = 40$}
    \label{fig:apx-opt-cost}
    \end{minipage}
\end{figure}

\paragraph{Evaluating Approximation to Optimal Social Cost.}
As we mentioned in the introduction, \citet{SKMPS22} use a different approach to measure the representativeness of a panel by considering the social cost (sum of $q$-costs) over a panel. In particular, they define the representativeness of an algorithm  as the worst-case ratio between the optimal social cost  and the (expected) social cost obtained by the algorithm. \citet{SKMPS22}, in their empirical analysis,  measure the average approximation to the optimal social cost of an algorithm $\mathcal{A}$ over a set of instances $\mathcal{I}$, defined as
$\frac{1}{|\mathcal{I}|} \sum\nolimits_{I \in \mathcal{I}} \frac{\min_P ~ \sum_{i \in [n]} c_q(i, P)}{\sum_{i \in [n]} c_q(i, \mathcal{A}(I))}$. Since finding the optimal panel is a hard problem and the dataset and panel sizes are large, \citet{SKMPS22} use a proxy for the minimum social cost, specifically, an implementation of the algorithm of \citet{KR13} for the fault-tolerant $k$-median problem that achieves a constant factor approximation of the optimal objective --- which is equivalent to minimizing the $q$-social cost. 
We use the same approach and report the  average approximation to the optimal social cost of $\fgc$ and \unifsel{}. 

In \Cref{fig:apx-opt-cost}, the reader can see the performance of the two different algorithms over this objective. For \ess{}-UK, we observe a similar behaviour from the two algorithms, while for \adult{}, $\fgc$ outperforms \unifsel{} for $q \in [3]$, which is again due to $\fgc$ capturing the cohesive group. All considered, we observe that $\fgc$ can maintain at least the same level or even better optimal social cost approximation as \unifsel{} would, while achieving significantly better empirical core guarantees in the two datasets.

\section{Discussion}

This work introduces a notion of proportional representation, called the core,  within the context of sortition. The core serves as a metric to ensure proportional representation across intersectional features. While uniform selection achieves an ex ante $O(1)$-$q$-core, it fails to provide a reasonable approximation to the ex post $q$-core. To address this, we propose a selection algorithm, $\fgc$, which preserves the positive aspects of uniform selection, i.e. fairness and ex ante $O(1)$-$q$-core, while also meeting the ex post $O(1)$-$q$-core requirement. We also highlight that the use of $\fgc$ allows the translation of the core requirement into a set of quotas, which can be integrated with another set of quotas to ensure proportional representation across both individual and intersectional features.

It is worth to emphasize that the limitations of uniform selection in satisfying ex post guarantees arise from the potential return of non-proportionally representative panels with a positive probability.  In~\Cref{app:expected-cost}, we explore a natural variation where the core property is mandated to hold over the expected $q$-costs of panels chosen from a selection algorithm. We demonstrate that this variation is incomparable with the ex post $q$-core, and more importantly uniform selection  fails to offer any meaningful multiplicative approximation to this variation, as well.

There are many directions for future work. First, there are gaps between the lower and upper bounds  we provide for both the ex ante and the ex post core.  Closing these gaps and investigating if there are fair selection algorithms that provide better guarantees to ex ante and/or ex post core is an immediate interesting direction. Moreover, we show Fair Greedy Capture is in the ex post $6$-$q$-core, but we do not provide  lower bounds  indicating that  this analysis is tight.   In fact, 
 in~\Cref{fgc-q=1}, we show that for $q=1$, $\fgc$   is in the ex post $((3+\sqrt{17})/2 \approxeq 3.57)$-$1$-core and this is tight. Finding tight bounds for the general case is an open question. In addition, in~\Cref{sec:afgc}, we show that if $q$ is known for an application at hand and we wish to provide guarantees with respect to ex post $q$-core,  a variation of $\afgc$ provides an approximation of  $((5 + \sqrt{41})/2)\approx 5.72$, which is slightly better than the approximation of $6$. Exploring if this is tight as well is another interesting direction. 
Furthermore, \citet{micha2020proportionally} show that for $q=1$, Greedy Capture \citep{chen2019proportionally},  provides better guarantees for the Euclidean space. So, another interesting question is to see if when the metric $d$  consists of usual distance functions such as norms $L^2$, $L^1$ and $L^{\infty}$,  $\fgc$ can provide better guarantees.

\bibliographystyle{plainnat}
\bibliography{abb,bibliography}

\newpage
\appendix

\newpage

\section{Minimizing Social Cost Fails to Provide Proportional Representation}\label{app:social-cost}
\begin{example}
Let $n$ be odd, $k=3$ and $q=1$.  Assume that there are four group  of individuals, $A$, $B$, $C$ and $D$. There are exactly one individual in group $A$, and exactly one individual in group $B$, while there are $\frac{n - 1}{2}$ individuals  in group $C$  and   $\frac{n - 1}{2}$ individuals in group  $D$.  The  distances between individuals in different groups is specified in the following table. 
\begin{table}[ht]
    \centering
    \begin{tabular}{l|ccccc}
         & $A$ & $B$ & $C$ & D \\ 
        \hline 
        $A$ &   $0$  & $\infty$   & $\infty$ & $\infty$ \\
        $B$ &   $\infty$  & $0$   & $\infty$ & $\infty$ \\
        $C$ & $\infty$ & $\infty$  & $0$   & $10$\\
        $D$ &  $\infty$ & $\infty$  & $10$   & $0$\\
    \end{tabular}
\end{table}

\noindent It is not difficult to see that any  panel with minimum social cost contains the single individuals in groups $A$  and $B$ and  one  individual from either group $C$ or group $D$,  as otherwise the social cost would be unbounded. This means  that while  the individuals in group $C$ form almost $50\%$ of the population, and similarly do the individuals  in group $D$, in any  panel with optimal social cost, either group $C$ or $D$  is not represented at all. On the other hand, the two eccentric individuals are always part of the panel. 
\end{example}

\section{Uniform Selection is in the Ex Post $2$-$k$-Core}\label{app:US-ex-post-core}

Next, we show that when $q=k$, any panel is  in the  ex post $2$-$k$-core, which implies that any algorithm including  uniform selection is in the  ex post $2$-$k$-core.   This is due to the fact that in this case only if the grand coalition, i.e. all the agents, has incentives to deviate, the ex post core is violated. 
\begin{theorem}\label{thm:US-ex-post-k=q}
Every panel is in the ex post $2$-$k$-core.  Therefore, uniform selection is in the ex post $2$-core,  and this  is  tight. 
\end{theorem}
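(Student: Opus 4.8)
The plan is to reduce the statement to a claim about the grand coalition and then exploit the special structure of $q=k$. Since $c_k(i,P')$ is only defined when $|P'|\ge k$, and for any $P'$ with $|P'|>k$ we trivially have $\CV(P,P',\alpha)\le n<|P'|\cdot n/k$, the only panels that can witness a violation are those with $|P'|=k$. For such a $P'$ we have $|P'|\cdot n/k=n$, so violating the $\alpha$-$k$-core requires $\CV(P,P',\alpha)=n$, i.e. that \emph{every} individual strictly improves: $c_k(i,P)>\alpha\, c_k(i,P')$ for all $i\in[n]$. Moreover, because $|P|=|P'|=k$ and $q=k$, the $q$-cost is simply the distance to the farthest panel member, $c_k(i,P)=\max_{p\in P}d(i,p)$, and likewise for $P'$.

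With $\alpha=2$ I would argue by contradiction. First I would pick a diameter-realizing pair of $P$: let $a,b\in P$ satisfy $d(a,b)=\max_{p,p'\in P}d(p,p')=:D$. Since every point of $P$ is within $D$ of $a$ and $d(a,b)=D$, we get $c_k(a,P)=D$, and symmetrically $c_k(b,P)=D$. Assuming a violation, the individuals $a$ and $b$ (which are population points, being members of $P$) both strictly improve, so $c_k(a,P')<D/2$ and $c_k(b,P')<D/2$; that is, \emph{all} of $P'$ lies within distance $D/2$ of $a$ and within $D/2$ of $b$. Taking any $p'\in P'$ and applying the triangle inequality then yields $D=d(a,b)\le d(a,p')+d(p',b)<D/2+D/2=D$, a contradiction. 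When $D=0$ the contradiction is even more immediate, since then $c_k(a,P)=0$ cannot exceed $2\,c_k(a,P')\ge0$. Hence no such $P'$ exists, every panel is in the ex post $2$-$k$-core, and in particular $\sort$ is.

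The step I expect to be the crux is obtaining the sharp constant $2$ rather than a weaker bound. The ``obvious'' approach --- center $P'$ at its $1$-center $o$ with radius $\rho'=c_k(o,P')$, take the point $p^\ast\in P$ farthest from $o$, and bound $c_k(p^\ast,P)\le 2\,d(o,p^\ast)$ against $c_k(p^\ast,P')\ge d(o,p^\ast)-\rho'$ --- only delivers a factor of $4$. The insight that tightens this to $2$ is to anchor the argument at \emph{both} endpoints of a diameter of $P$ simultaneously, so that the single triangle inequality $d(a,b)\le d(a,p')+d(p',b)$ does all the work.

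For tightness I would exhibit, for every $\alpha<2$, an instance and a panel that violates the $\alpha$-$k$-core. Take $k=2$, $q=2$, and place four individuals on the real line at $0,\,2,\,1-\delta,\,1+\delta$ (so the metric is inherited from $\mathbb{R}$). Let $P=\{0,2\}$ and $P'=\{1-\delta,1+\delta\}$. Then $c_2(0,P)=c_2(2,P)=2$ while $c_2(0,P')=c_2(2,P')=1+\delta$, a ratio of $2/(1+\delta)$, and the two middle individuals improve by an unbounded factor. Choosing $\delta$ small enough that $2/(1+\delta)>\alpha$ makes all $n=4$ individuals improve by more than $\alpha$, so $\CV(P,P',\alpha)=4=|P'|\cdot n/k$ and $P$ is not in the $\alpha$-$k$-core. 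Since $\sort$ selects every size-$k$ subset with positive probability, it places positive probability on $P$, hence is not in the ex post $\alpha$-$k$-core for any $\alpha<2$, establishing that the factor $2$ is tight.
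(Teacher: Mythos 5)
Your proof is correct and follows essentially the same route as the paper's: reduce to $|P'|=k$ so that a violation forces \emph{every} individual to improve, then use a diameter-realizing pair and a single triangle inequality through an arbitrary member of $P'$ to extract the factor $2$ --- the only cosmetic differences being that you take the diameter of the panel $P$ (and argue by contradiction) where the paper takes the diameter of the whole population (and bounds $c_k(i_2,P)\le d(i_1,i_2)\le c_k(i_1,P')+c_k(i_2,P')\le 2c_k(i_2,P')$ directly). Your four-point tightness instance on a line with $k=2$ is a valid, smaller-scale analogue of the paper's three-group construction, which works for general $n$ and $k$.
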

\begin{proof}
    Consider any panel $P$. It suffices to show that for any arbitrary panel $P'$ of size $k$, the $q$-cost of all individuals cannot be improved by a factor of greater than $\alpha = 2$. 

    Let $i_1$ and $i_2$ be the two individuals in the population with the maximum distance between them.
    Now, consider an arbitrary representative $r$ in panel $P'$. 
    Without loss of generality, suppose that $c_k(i_1,P')\leq c_k(i_2,P')$.  Then, we have
    \begin{align*}
       c_k(i_2,P) = {\max_{j \in P}} \; d(i_2, j)
       &\leq  d(i_1,i_2) & \text{(by the choice of $i_1$ and $i_2$)}
       \\
       &\leq d(i_1,r) + d(r, i_2) & \text{(triangle inequality)}\\
       &\leq c_k(i_1, P') + c_k(i_2, P') & \text{(as $r \in P'$)}\\
       & \leq 2\cdot  c_k(i_2, P'). 
    \end{align*}
    This implies $\CV(P,P',2)< |P'|\cdot n/k=n$, since the $q$-cost for $i_2$ does not improve by a factor of more than two. From, this we get that any panel $P$ is in the ex post $2$-$k$-core, and therefore uniform selection is in the ex post $2$-core.

    Next, we show that there exists an instance such that uniform selection is not in the ex post $\alpha$-$k$-core for $\alpha<2$. Consider the case that the individuals are assigned into three groups, $A$, $B$ and $C$, with  $\floor{k/2}$,   $\ceil{k/2}$, and $n-k$ individuals, respectively. The  distances between individuals is as specified in the following table. 
    
\begin{table}[ht]
    \centering
    \begin{tabular}{c|ccc}
         & $A$ & $B$ & $C$ \\ 
        \hline
        $A$ &    $0$  & $2$   & $1$ \\
        $B$ &   $2$  & $0$   & $1$\\
        $C$ &    $1$  & $1$   & $0$\\
    \end{tabular}
\end{table}
   
\noindent The panel $P$ which consists of all the $k$ people in groups $A$ and $B$ is in the support of uniform selection. Then, for $i \in A \cup B$, $c_k(i, P) = 2$ as the $k$-th closest representative in $P$ lies in the other group. For $i \in C$,  the $c_k(i, P) = 1$. Now, consider panel $P'$ that consists of $k$ individuals from group $C$. The $q$-costs of all individuals improve by a factor of at least $2$. Hence, $\sort$ violates ex post $2$-$k$-core in this example.
\end{proof}

\section{Uniform Selection is in the Ex Ante \texorpdfstring{$k$}{}-Core} \label{app:US-ex-ante-k}
\begin{proposition}
       Uniform selection is in the ex ante $k$-core.
\end{proposition}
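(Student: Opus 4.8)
The plan is to exploit the fact that for $q=k$ the cost collapses to $c_k(i,P)=\max_{j\in P}d(i,j)$ for any size-$k$ panel, and to reduce the ex ante condition $\E_{P\sim\sort}[\CV(P,P',1)]<|P'|\cdot n/k$ to a single binding instance, namely $|P'|=k$. First I would split on the size of the candidate deviation panel $P'$. If $|P'|>k$, then $|P'|\cdot n/k>n$, while $\CV(P,P',1)\le n$ deterministically, so the bound holds trivially. If $1\le|P'|<k$, then $P'$ has fewer than $k=q$ members, so the $k$-th closest representative of any individual in $P'$ does not exist; adopting the standard convention that $c_k(i,P')=\infty$ in this case, no individual can satisfy $c_k(i,P)>c_k(i,P')$, hence $\CV(P,P',1)=0<|P'|\cdot n/k$. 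This isolates the only nontrivial case, $|P'|=k$, where the right-hand side equals $n$; in the language of the core this is exactly the statement that the grand coalition cannot profitably deviate ex ante.

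For $|P'|=k$ the target reduces to showing $\E_{P\sim\sort}[\CV(P,P',1)]<n$. The key observation is that uniform selection places positive probability on every size-$k$ subset, so in particular $\Pr_{P\sim\sort}[P=P']=1/\binom{n}{k}>0$. In the event $P=P'$ we have $c_k(i,P)=c_k(i,P')$ for every $i$, so no individual strictly improves and $\CV(P',P',1)=0$. Bounding $\CV$ by its trivial maximum $n$ on all other realizations, I would write
\begin{align*}
\E_{P\sim\sort}[\CV(P,P',1)]
&=\sum_{P''}\Pr[P=P'']\cdot \CV(P'',P',1)\\
&\le \Big(1-\tfrac{1}{\binom{n}{k}}\Big)\cdot n+\tfrac{1}{\binom{n}{k}}\cdot 0
= n\Big(1-\tfrac{1}{\binom{n}{k}}\Big)<n=|P'|\cdot\tfrac{n}{k},
\end{align*}
which is exactly the ex ante $k$-core inequality for this $P'$. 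Combining the three cases establishes that $\sort$ is in the ex ante $k$-core.

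The argument has no real computational obstacle; the only point requiring care is the treatment of deviation panels of size smaller than $q=k$, where the $q$-cost is not literally defined and one must fix the convention $c_k(i,P')=\infty$ (equivalently, a coalition cannot be better represented by a panel too small to supply $q$ close representatives). Everything else follows from the single structural fact that uniform selection never excludes any panel from its support, so matching $P'$ exactly carries positive probability. This is precisely why fairness together with full support suffices here, and also why the same trick fails for $q<k$: there, coalitions with $|S|<n$ can deviate, and the positive-probability-of-equality event on $P=P'$ no longer forces a strictly-below-$n$ bound on the relevant (smaller) preference counts.
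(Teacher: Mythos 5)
Your proposal is correct and uses essentially the same argument as the paper: the binding case is $|P'|=k$ (right-hand side $n$), and since uniform selection places positive probability on the event $P=P'$, where $\CV(P',P',1)=0$, the expectation is strictly below $n$. The only difference is that you explicitly dispose of the cases $|P'|\neq k$, which the paper handles implicitly by restricting attention to size-$k$ deviating panels; this is a minor tidying rather than a different route.
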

\begin{proof}
    To satisfy ex ante $k$-core,  for any panel $|P'|$ of size $k$, we should have
    \begin{align*}
    \E_{P \sim \calD_k}[\CV(P,P',\alpha)] <  |P'|\cdot \frac{n}{k}=n.
    \end{align*}
    Essentially, this means that the ex ante $k$-core is violated only if the grand coalition, i.e. all the agents, has incentives to deviate to $P'$, in expectation.   
    Since $\CV(P,P',\alpha) \le n$ for all $P'$ by definition, it suffices to show that there exists a panel $P$ that is chosen with non-zero probability, and it holds that $\CV(P,P',\alpha) < n$.  Since,  $\sort$ chooses any panel with non-zero probability, including $P'$, there is a non-zero probability that we realize panel $P = P'$ for which $\CV(P,P',\alpha) = 0$ --- since the $q$-costs do not strictly improve for any individual. Thus, the expected preference count of the panel that selected from uniform selection with respect to any other panel is strictly less than $n$, satisfying the ex ante $k$-core.
\end{proof}

\section{Proof of Chu–Vandermonde Identity }\label{append:CV-identity}
\begin{proof}
We give a combinatorial argument for this identity. Suppose we want to select $k + 1$ items out of a set of size $n + 1$. For $i \in [1, n + 1]$, let $P_{i}$ be the number of such subsets in which the $(r+1)$th picked item is item $i$. As each subset is counted exactly once among $P_i$'s (at the position of its $(r + 1)$th item), we have $\sum_{i = 1}^{n + 1} P_i = \binom{n + 1}{k + 1}$. Now, we calculate $P_i$.
Suppose the $(r+1)$th item is $i$. Then, $r$ items should be selected from the first $i - 1$ items and $k + 1 - (r + 1) = k - r$ items should be selected from the last $n + 1 - i$ items. Therefore,
$P_i = \binom{i - 1}{r} \cdot \binom{n - (i - 1)}{k - r}$. Then, we have
\[
\binom{n + 1}{k + 1} = \sum_{i = 1}^{n + 1} P_i = \sum_{i = 1}^{n + 1} \binom{i - 1}{r} \cdot \binom{n - (i - 1)}{k - r}
= \sum_{j = 0}^{n} \binom{j}{r} \cdot \binom{n - j}{k - r}.
\qedhere
\]
\end{proof}

\section[q-Core over Expected Cost]{$q$-Core over Expected Cost} \label{app:expected-cost}

A variation of the demanding ex post $q$-core is to ask the core property to hold with respect to the expected $q$-cost, as it is given in the definition below. 

\begin{definition}[$\alpha$-$q$-Core over Expected Cost]
    A selection algorithm $\A_{k}$ is in the  $\alpha$-$q$-core over expected cost (or in the $q$-core over expected cost, for $\alpha=1$) if there is no $S\subseteq [n]$ and a panel $P'$ with $|P'|\leq |S|/n \cdot k $  such that
    \begin{align*}
        \forall i\in S, \ \E_{P \sim \A_{k}} [c_q(i,P)]>\alpha \cdot c_q(i,P').
    \end{align*} 
\end{definition}

 We start by showing that the ex post $q$-core and the $q$-core over expected cost are incomparable. 

\begin{proposition}
For any $q\in [k]$, ex post $q$-core and $q$-core over expected cost are incomparable.
\end{proposition}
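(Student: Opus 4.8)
The plan is to prove incomparability by exhibiting, for each of the two non-implications, an instance together with a selection algorithm that satisfies one property and violates the other. Throughout I take $\alpha=1$ and give the constructions for $q=1$; each extends to general $q$ by replacing single representatives with $q$-point clusters, so that the $q$-th nearest representative governs the cost, and by scaling all group sizes by a factor of $q$.

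\emph{Ex post $q$-core does not imply $q$-core over expected cost.} The simplest witness is $k=1$: every singleton panel $\{j\}$ is trivially in the exact $q$-core, since no $P'$ can be strictly preferred by all $n$ individuals (the chosen $j$ has cost $0$), so $\sort$ is ex post. On a star metric with a center $o$ at distance $1$ from every leaf and leaves pairwise at distance $2$, the expected cost $\E_{P\sim\sort}[c_1(i,P)]$ of each individual equals its average distance to the population, which strictly exceeds $d(i,o)$ for all $i$ once $n>3$; hence $S=[n]$ deviates to $P'=\{o\}$ (affordable as $|P'|=1=|S|k/n$), and $\sort$ is not in the $q$-core over expected cost. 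For general $k$, where $\sort$ fails ex post (\Cref{thm:US-impos-ex-post-core}), I would use a \emph{hub} gadget: a group $G$ of $n/k$ individuals each at distance $L$ from a single hub individual $\pi$ and pairwise at distance $2L$, plus a far-away cluster $Z$ holding the rest. The algorithm outputs a uniformly random panel made of two members of $G$ together with $k-2$ members of $Z$. The key is a counting argument: in every realized panel the only individuals with positive cost are the $n/k-2$ unserved members of $G$ (at cost $2L$) and the hub (at cost $L$), totalling $n/k-1<n/k$, so no deviation can recruit $|P'|\cdot n/k$ supporters and every panel is in the exact $q$-core. Yet each member of $G$ is served with probability only $2k/n$, so its expected cost is $2L(1-2k/n)>L$ whenever $n/k\ge 5$, and $S=G$ again deviates to $\{\pi\}$.

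\emph{$q$-core over expected cost does not imply ex post $q$-core.} Here I would use a \emph{rare bad panel}. On a star gadget, mix a ``good'' panel $P_{\mathrm{good}}$ lying in the exact $q$-core with a small probability $\delta$ of a ``bad'' panel $P_{\mathrm{bad}}$ that piles its representatives onto one tip, so that the union of the harmed groups strictly prefers a deviation $P'$ and $\CV(P_{\mathrm{bad}},P',1)\ge|P'|\cdot n/k$, breaking the ex post core. The group sizes are tuned so that each individual group harmed in $P_{\mathrm{bad}}$ is strictly smaller than $n/k$ and therefore cannot afford even a single representative of its own; coupled with $\delta$ being tiny, every coalition's expected cost stays below what any affordable $P'$ could deliver, so the algorithm remains in the $q$-core over expected cost.

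The crux in both directions is the shared threshold $|P'|\cdot n/k$, which is exactly what makes the two notions ``almost'' comparable and which the triangle inequality tends to pin down tightly. In the first direction the per-panel count of positive-cost individuals must be held strictly below $n/k$ while the expected cost is pushed above $L$; since the triangle inequality forces every unserved member of $G$ to sit at distance exactly $2L$ from the served ones, the needed slack comes precisely from serving \emph{two} members of $G$ per panel and from taking the hub to be a single individual. In the second direction a sacrificed group must be large enough to breach the ex post threshold yet too small, and sacrificed too rarely, to justify an ex ante deviation. Either way the main work — and the main obstacle — is verifying the inequality against \emph{every} candidate $P'$ rather than the obvious one.
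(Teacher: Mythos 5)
Your overall strategy --- an explicit algorithm-plus-instance for each non-implication --- is the same as the paper's, but both halves have gaps. In the first direction, the $k=1$ star and the hub gadget are correct for $q=1$ (the counting argument that only $\nicefrac{n}{k}-1$ individuals ever have positive cost is exactly the right idea), but the one-sentence extension to general $q$ does not go through. Once each served point becomes a $q$-point cluster, the panel must contain two full clusters of $G$ (that is $2q$ seats) \emph{and} at least $q$ members of $Z$ so that $Z$ keeps $q$-cost zero; this is impossible for $q>k/3$, and in particular for $q=k$, which the statement covers. Serving only one cluster of $G$ does not help: then all $q(\nicefrac{n}{k}-1)$ unserved $G$-members plus the $q$ hub members have positive cost, totalling exactly $q\cdot\nicefrac{n}{k}$, and deviating to the hub cluster strictly improves all of them, so the realized panel is no longer in the exact $q$-core. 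The paper avoids this by using a different gadget (three symmetric groups $A,B,C$ at mutual distance $2$, each at distance $1$ from a small group $D$ of $q$ individuals, with the three corresponding panels mixed with probability $\nicefrac{1}{3}$ each), which works uniformly for all $q\in[k]$.

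In the second direction you give only a sketch, and the step you defer is the one that actually fails as stated. If $P_{\mathrm{good}}$ is merely in the exact $q$-core, then for a coalition $S$ of size $|P'|\cdot\nicefrac{n}{k}$ you are only guaranteed a witness $i\in S$ with $c_q(i,P_{\mathrm{good}})\le c_q(i,P')$, a weak inequality; if that witness is harmed in $P_{\mathrm{bad}}$, then $(1-\delta)\,c_q(i,P_{\mathrm{good}})+\delta\, c_q(i,P_{\mathrm{bad}})>c_q(i,P')$ for \emph{every} $\delta>0$, so ``$\delta$ tiny'' does not rescue the expected-cost core. You need a concrete instance in which the witnessing individuals either have strict slack or are untouched by $P_{\mathrm{bad}}$. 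The paper's construction achieves this with exact arithmetic rather than a limit: two panels mixed with probability $\nicefrac{1}{2}$ over groups at distances $0$, $1$, $2$, arranged so that every candidate coalition contains someone whose expected cost is exactly $1$ and whose cost under any affordable $P'$ is also at least $1$, while one realized panel admits a coalition of size exactly $q\cdot\nicefrac{n}{k}$ that improves by a factor of $2$.
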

\begin{proof}
    First, we show that the  ex post $q$-core does not imply the $q$-core over expected cost.  Assume that $n$ is divisible by $k$ and $q$ is divisible by $3$. Consider an instance where there are  five groups of individuals, $A$, $B$, $C$, $D$ and $E$. 
    The first three groups contain $(q\cdot n/k-q)/3$ individuals each, the fourth group contains $q$ individuals and the last group contains $n-q\cdot n/k$ individuals. The table below provides the specified distances between individuals within given groups.
\begin{table}[ht]
    \centering
    \begin{tabular}{c|ccccc}
        & $A$ & $B$ & $C$ & $D$ & $E$ \\ 
        \hline
        $A$ &  $0$  & $2 $   & $2$ & $1$ & $\infty$\\
        $B$ &  $2 $  & $0$   & $2 $ & $1$ & $\infty$\\
        $C$ &  $2 $  & $2 $   & $0$ & $1$ & $\infty$\\
        $D$ &  $1$  & $1$   & $1$ & $0$ & $\infty$\\
        $E$ &  $\infty$  & $\infty$    & $\infty$ & $\infty$ & $0$\\ 
    \end{tabular}
\end{table}

\noindent
    Suppose that a selection algorithm $\A_{k}$  returns with probability $1/3$ a panel that contains $q$ individuals from group $A$ and the remaining representatives are from group $E$, with probability $1/3$ a panel that contains $q$ individuals from group $B$ and the remaining representatives are from group $E$ and with probability $1/3$ a panel that contains $q$ individuals from group $C$ and the remaining representatives are from group $E$. All these panels  are in the ex post $q$-core, since there is no sufficiently large group such  that if they choose another panel, all of them reduce their distance. 
    Now, we see that for each $i$ in $A$ or $B$ or $C$, it holds that 
    \begin{align*}
      \E_{P \sim \A_{k}} [c_q(i,P)]= \frac{2}{3}\cdot 2 =4/3 
    \end{align*}
    while for each $i$ in $D$, it holds that 
     \begin{align*}
      \E_{P \sim \A_{k}} [c_q(i,P)]= 1.
    \end{align*}
    If all the individuals in $A$, $B$, $C$ and $D$ choose a panel $P'$ that contains $q$ individuals from $D$, then all of them reduce their distance by a factor at least equal to $4/3$. 

     Next, we show that the  $q$-core over expected cost does not imply the ex post $q$-core.     
    Consider an instance where there are  four groups of individuals, $A$, $B$, $C$, $D$. 
    Group $A$ contains $q\cdot n/k-q$ individuals,  group $B$ contains $q$ individuals,  group $C$ contains $q$ individuals and  group $D$ contains all the remaining individuals.  The  distance between individuals belonging to given groups is specified in the following table. 
\begin{table}[ht]
    \centering
    \begin{tabular}{c|cccc}
        & $A$ & $B$ & $C$ & $D$ \\ 
        \hline 
        $A$ &   $0$  & $1$   & $2$ & $\infty$ \\
        $B$ & $1$  & $0$   & $1$ & $\infty$\\
        $C$ &  $2$  & $1$   & $0$ & $\infty$\\
        $D$ &  $\infty$  & $\infty$   & $\infty$ & $0$\\
    \end{tabular}
\end{table}

\noindent
Suppose that a selection algorithm $\A_{k}$  returns with probability $1/2$ a panel $P_1$ that contains $q$ individuals from group $A$  and $k-q$ individuals from group $D$, and with the remaining probability returns  a panel $P_2$ that contains $q$ individuals from  group $C$ and $k-q$ individuals from group $D$. Then, for each $i$ in $A\cup C$, we have that 
    \begin{align*}
      \E_{P \sim \A_{k}} [c_q(i,P)]= \frac{1}{2}\cdot 2=1
    \end{align*}
    while for each $i$ in $B$, we have that
     \begin{align*}
      \E_{P \sim \A_{k}} [c_q(i,P)]= \frac{1}{2}  \cdot 1+ \frac{1}{2}  \cdot 1=1.
    \end{align*}
Hence,  this algorithm is in the $q$-core over expected cost.  But when the algorithm returns $P_1$,  all the individuals in $A$ and $B$ can reduce their cost by a factor of $2$ by choosing $q$ representatives in $B$. 
\end{proof}

Next, we show that as in the case of the ex post $q$-core, uniform selection is in the $2$-$k$-core over expected cost. 

\begin{theorem}\label{thm:US-expected-core-k=q}
    For $q=k$, uniform selection is   $2$-$q$-core over expected cost.  
\end{theorem}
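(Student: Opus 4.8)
The plan is to reduce the statement to the grand coalition and then reuse, verbatim, the witness from the ex post argument. First I would observe that for $q=k$ the quantity $c_k(i,P')$ is only finite when $|P'|\ge k$, so the size constraint $|P'|\le |S|\cdot k/n$ forces $|P'|=k$ and $|S|=n$; in other words, the only conceivable deviation is the grand coalition $S=[n]$ deviating to a full panel $P'$ of size $k$. It therefore suffices to exhibit, for every panel $P'$ of size $k$, a single individual whose \emph{expected} $k$-cost under $\sort$ is at most twice its $k$-cost under $P'$, since then not all of $[n]$ can strictly improve by a factor exceeding $2$.

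The key point is that the witness used in the proof of \Cref{thm:US-ex-post-k=q} does not depend on the realized panel. Fix $P'$, let $i_1,i_2$ be a diameter-attaining pair (so that $d(i_1,i_2)=\max_{i,j}d(i,j)$), and assume without loss of generality that $c_k(i_1,P')\le c_k(i_2,P')$. For an \emph{arbitrary} realized panel $P$ of size $k$ I would rerun the same chain of inequalities: $c_k(i_2,P)=\max_{j\in P} d(i_2,j)\le d(i_1,i_2)$ because $d(i_1,i_2)$ is the diameter, and then, picking any $r\in P'$, $d(i_1,i_2)\le d(i_1,r)+d(r,i_2)\le c_k(i_1,P')+c_k(i_2,P')\le 2\,c_k(i_2,P')$. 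Crucially, the resulting bound $c_k(i_2,P)\le 2\,c_k(i_2,P')$ holds for \emph{every} $P$ in the support of $\sort$, since the only property of $P$ it invokes is $c_k(i_2,P)\le d(i_1,i_2)$.

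Taking expectations over $P\sim\sort$ then yields $\E_{P\sim\sort}[c_k(i_2,P)]\le 2\,c_k(i_2,P')$, so the fixed individual $i_2\in[n]$ is not strictly better off by a factor exceeding $2$ in expectation; hence the grand coalition cannot profitably deviate and $\sort$ is in the $2$-$k$-core over expected cost. I expect the only real subtlety to be conceptual rather than computational: in general the ex post $q$-core and the $q$-core over expected cost are incomparable (as the preceding proposition shows), and the reason the implication goes through here is precisely that the witness $i_2$ can be chosen independently of the realized panel, so the uniform pointwise bound survives taking the expectation. Identifying this uniform witness, rather than a per-realization one, is the crux; once it is in hand, pushing the expectation inside is immediate.
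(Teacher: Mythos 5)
Your proof is correct and follows essentially the same route as the paper: it reuses the diameter-pair witness $i_2$ from the ex post $2$-$k$-core argument, notes that the bound $c_k(i_2,P)\le 2\,c_k(i_2,P')$ holds uniformly over all realized panels $P$, and pushes the expectation through. Your explicit remark that the witness must be chosen independently of $P$ is exactly the (implicit) crux of the paper's two-line proof, so no further comparison is needed.
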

\begin{proof}
In the proof of~\Cref{thm:US-ex-post-k=q}, we show that for any $P$ and any panel $P'$, with $|P|=|P'|=k$, there exists $i\in N$, such that $c_k(i,P)\leq 2\cdot  c_k(i,P')$.     This implies that $
       \E_{P\sim \sort }[c_k(i,P)] \leq 2\cdot  c_k(i, P'),
    $
    which means that uniform selection is  in the $2$-$k$-core over expected cost. This is because to violate $2$-$k$-core over expected cost, the $k$-cost of the entire population would have to improve by a factor of more than $2$, which does not hold for individual $i$.
\end{proof}

Again as in the case of the ex post $q$-core, we show that uniform selection does not provide any bounded multiplicative approximation to the $q$-core over expected cost, for $q\in [k-1]$. 
\begin{theorem}
For any $q \in [k-1]$ and  $\floor{\nicefrac{n}{k}}\ge k$,  there exists an instance such that  uniform selection is  not in the  $\alpha$-$q$-core over expected cost, for any bounded $\alpha$. 
\end{theorem}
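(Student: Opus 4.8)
The plan is to reuse the two-group instance from the proof of~\Cref{thm:US-impos-ex-post-core}: place $\floor{\nicefrac{n}{k}}$ individuals in group $A$ and the remaining $n - \floor{\nicefrac{n}{k}}$ in group $B$, with intra-group distances $0$ and inter-group distance $1$. The shift from the ex post statement to the expected-cost statement means I can no longer point to a single catastrophic panel; instead I must argue that the \emph{expected} $q$-cost of each $B$-individual is bounded away from $0$, while a deviation drives their cost to exactly $0$.

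First I would pin down the $q$-cost of any $i \in B$ as a function of $b \coloneqq |P \cap B|$. Since $i$ is at distance $0$ from the other $B$-members and at distance $1$ from every $A$-member, the sorted distance vector from $i$ consists of $b$ zeros followed by $k-b$ ones; as $k \ge q$ there are always enough representatives, so $c_q(i,P) = \mathbbm{1}[b < q]$. Hence $\E_{P\sim\sort}[c_q(i,P)] = \Pr_{P\sim\sort}[\,|P\cap B| < q\,]$. Next I would show this probability is strictly positive: because $|A| = \floor{\nicefrac{n}{k}} \ge k$, uniform selection can draw all $k$ representatives from $A$, so
\[
\Pr\nolimits_{P\sim\sort}[\,|P\cap B| < q\,] \;\ge\; \Pr\nolimits_{P\sim\sort}[\,|P\cap B| = 0\,] \;=\; \frac{\binom{|A|}{k}}{\binom{n}{k}} \;>\; 0,
\]
so every $i \in B$ has $\E_{P\sim\sort}[c_q(i,P)] > 0$.

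Finally I would exhibit the deviation. Take $S = B$ and let $P'$ consist of any $q$ members of $B$; then $c_q(i,P') = 0$ for all $i \in S$, so $\E_{P\sim\sort}[c_q(i,P)] > \alpha \cdot c_q(i,P') = 0$ for every bounded $\alpha$. It remains to check the entitlement $|P'| \le \nicefrac{|S|}{n}\cdot k$, i.e.\ $q \le k - k\floor{\nicefrac{n}{k}}/n$; since $\floor{\nicefrac{n}{k}} \le \nicefrac{n}{k}$ gives $k\floor{\nicefrac{n}{k}}/n \le 1$, the right-hand side is at least $k-1 \ge q$, so the coalition $S$ is indeed entitled to $P'$. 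This witnesses an unbounded violation of the $q$-core over expected cost.

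The only genuinely delicate point is the strict positivity of the expected cost: unlike the ex post argument, a single bad realization does not suffice, so the argument hinges on the hypothesis $\floor{\nicefrac{n}{k}} \ge k$, which guarantees a positive-probability all-$A$ panel and hence a positive probability that fewer than $q$ members of $B$ are selected. The remaining steps are the same bookkeeping as in~\Cref{thm:US-impos-ex-post-core}.
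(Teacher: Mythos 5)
Your proposal is correct and follows essentially the same route as the paper: the same two-group instance from the ex post impossibility result, the observation that the all-$A$ panel occurs with positive probability so every $B$-individual has strictly positive expected $q$-cost, and the deviation to $q$ members of $B$ achieving cost $0$. Your additional bookkeeping (the exact formula $c_q(i,P)=\mathbbm{1}[\,|P\cap B|<q\,]$ and the explicit check that $q \le |B|\cdot k/n$) is sound and merely makes explicit what the paper leaves implicit.
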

\begin{proof}
Consider the instance as given in proof ~\Cref{thm:US-impos-ex-post-core}.  As before, uniform selection may return a panel that consists only from individuals in group $A$. Therefore, all the individuals in group $B$ have  positive expected $q$-cost under uniform selection, while if they choose a panel among themselves, they would all have a $q$-cost of $0$. Thus, uniform selection is not in the  $\alpha$-$q$-core over expected cost for any bounded $\alpha$. 
\end{proof}

Lastly, we show that $\fgc_{k}$ is in the $6$-$q$-core over expected cost, for every $q$. 
\begin{theorem}\label{thm:GC-core-over-expected-core}
For every $q$, $\fgc_{k}$  is  in the  $6$-$q$-core over expected cost. 
\end{theorem}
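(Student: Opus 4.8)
The plan is to mirror the argument of \Cref{thm:ex-post-core} almost verbatim, exploiting the fact that the ball structure produced by the deterministic Greedy Capture phase of $\fgc_{k}$ does not depend on which panel is ultimately realized. The crucial observation is that the proof of \Cref{thm:ex-post-core} does not merely bound $\CV(P,P',6)$ for each realized $P$ separately; it actually exhibits a \emph{single} individual, determined only by the putative deviating group, whose $q$-cost is within a factor $6$ of its deviating cost in \emph{every} panel in the support. Such a uniform-over-support bound survives taking expectations, which is exactly what the core over expected cost requires. This is the point that needs care, since the incomparability result proved above shows the two notions are in general incomparable, so one cannot simply invoke the ex post core to conclude.

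Concretely, I would argue by contradiction. Suppose $\fgc_{k}$ is not in the $6$-$q$-core over expected cost, so there exist $S \subseteq [n]$ and a panel $P'$ with $|S| \ge |P'| \cdot n/k$ such that $\E_{P \sim \fgc_{k}}[c_q(i,P)] > 6 \cdot c_q(i,P')$ for every $i \in S$. Apply the first part of \Cref{lem:partition} to partition $S$ with respect to $P'$ into $T_1, \dots, T_m$ with $m \le \floor{|P'|/q}$. By pigeonhole, some part $T_\ell$ has $|T_\ell| \ge q \cdot n/k$, and it comes with an individual $i^*_\ell \in T_\ell$ satisfying $c_q(i,P') \le c_q(i^*_\ell,P')$ and $\top_q(i,P') \cap \top_q(i^*_\ell,P') \ne \emptyset$ for all $i \in T_\ell$. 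Exactly as in \Cref{thm:ex-post-core}, the triangle inequality through a shared representative gives $d(i,i^*_\ell) \le 2 \cdot c_q(i^*_\ell,P')$ for each $i \in T_\ell$, so the ball $B(i^*_\ell,\, 2 \cdot c_q(i^*_\ell,P'))$ captures all of $T_\ell$.

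Next I would transfer this to the fixed balls opened by $\fgc_{k}$. Since $|T_\ell| \ge q \cdot n/k$ and each opened ball carries total fractional mass $1$, at least $q$ opened balls contain members of $T_\ell$; and the capture argument of \Cref{thm:ex-post-core} shows at least $q$ of them, say $B_1, \dots, B_q$, have radius at most $2 \cdot c_q(i^*_\ell,P')$ (otherwise a unit of $T_\ell$-mass would sit in balls too large to have formed before $B(i^*_\ell,\, 2 \cdot c_q(i^*_\ell,P'))$ captured it). Because $\fgc_{k}$ places at least one representative in each ball in every realized panel, each $B_j$ supplies a representative $r_j$ with $d(i^*_\ell, r_j) \le 6 \cdot c_q(i^*_\ell,P')$, by the same two-hop triangle inequality through a $T_\ell$-member of $B_j$ and \Cref{lem:cost-distance}. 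Hence $c_q(i^*_\ell, P) \le 6 \cdot c_q(i^*_\ell,P')$ holds \emph{for every} $P$ in the support of $\fgc_{k}$, so $\E_{P \sim \fgc_{k}}[c_q(i^*_\ell,P)] \le 6 \cdot c_q(i^*_\ell,P')$. Since $i^*_\ell \in T_\ell \subseteq S$, this contradicts the assumed strict improvement for all members of $S$, completing the argument.

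The only genuinely new element beyond \Cref{thm:ex-post-core} is the first-paragraph observation, and the main thing to be careful about is that the individual $i^*_\ell$ and the balls $B_1, \dots, B_q$ depend solely on $S$, $P'$, and the deterministic Greedy Capture phase, never on the random realization --- which is what makes the per-realization bound uniform and therefore valid in expectation. I expect no further obstacle, as the metric estimates are identical to those already established.
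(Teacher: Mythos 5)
Your proposal is correct and follows essentially the same route as the paper's own proof, which likewise derives the contradiction by noting that the argument for \Cref{thm:ex-post-core} produces a single individual $i^*_{\ell}\in S$ with $c_q(i^*_{\ell},P)\leq 6\cdot c_q(i^*_{\ell},P')$ for \emph{every} panel $P$ in the support, so the bound survives taking expectations. Your write-up merely makes explicit the deterministic dependence of $i^*_{\ell}$ and the balls on $S$, $P'$, and the Greedy Capture phase, which the paper leaves implicit.
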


\begin{proof}
Let $\calD_{k}$ be the distribution that $\fgc_k$ returns.   Suppose for contradiction that there exists  $S\subseteq [n]$ and $P'\subseteq [n]$, with $|S|\geq |P'| \cdot n/k$,  such that 
      \begin{align*}
          \forall i \in S,  \quad \quad \E_{P\sim \calD_{k,q}}[c_{q}(i,P)]>6 \cdot  c_{q}(i,P').
      \end{align*} 

In the proof of~\Cref{thm:ex-post-core}, we show that there exist $i^*_{\ell}\in S$ such that  for every $P$ in the support of the algorithm, we have that $c_q(i^*_{\ell},P)\leq 6\cdot c_q(i^*_{\ell},P')$. This implies that   $\E_{P\sim \calD_{k}}c_q(i^*_{\ell},P)\leq 6\cdot c_q(i^*_{\ell},P')$ which is a contradiction. 
\end{proof}

\section{Analysis of $\fgc_{k}$ for the Ex Post $1$-Core}\label{fgc-q=1}
\begin{theorem}\label{theor-greedy-capture-q-1}
   $\fgc_{k}$  in the  ex post $\frac{3+\sqrt{17}}{2}$-$1$-core and there exists an instance for which this bound is tight. 
\end{theorem}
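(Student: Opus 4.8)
The plan is to specialize the argument behind \Cref{thm:ex-post-core} to $q=1$ and sharpen the two triangle-inequality estimates that produced the factor $6$. Fix a panel $P$ in the support of $\fgc_k$ and suppose, for contradiction, that $P$ is not in the $\alpha$-$1$-core for $\alpha = \frac{3+\sqrt{17}}{2}$; then there are $S$ and $P'$ with $|S|\ge |P'|\cdot n/k$ and $c_1(i,P) > \alpha\, c_1(i,P')$ for every $i\in S$. Applying the first part of \Cref{lem:partition} with $q=1$ (so $m=|P'|$ and each part consists of individuals sharing a closest representative of $P'$), a counting argument gives a part $T_\ell$ with $|T_\ell|\ge n/k$ and a witness $i^\ast\in T_\ell$ maximizing $c_1(\cdot,P')$ over $T_\ell$; write $\eta := c_1(i^\ast,P')$. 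Exactly as in \Cref{thm:ex-post-core}, $T_\ell \subseteq B(i^\ast, 2\eta)$, and since every individual carries mass $k/n$ in the fractional allocation, $T_\ell$ contributes total mass at least $1$. It remains to exhibit a representative of $P$ within distance $\alpha\eta$ of $i^\ast$, contradicting $c_1(i^\ast,P)>\alpha\eta$.

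The key objects are two kinds of balls opened by $\fgc_k$. First, $i^\ast$ is captured by some ball of radius $\rho_{\mathrm{cap}}$ (center $c$, with $d(i^\ast,c)\le \rho_{\mathrm{cap}}$), which by \Cref{lem:cost-distance} and the triangle inequality contains a representative within $2\rho_{\mathrm{cap}}$ of $i^\ast$; hence $c_1(i^\ast,P)\le 2\rho_{\mathrm{cap}}$. Second, the earliest ball to capture any point $w\in B(i^\ast,2\eta)$, of radius $\rho_0$, contains a representative within $d(i^\ast,w)+2\rho_0 \le 2\eta + 2\rho_0$ of $i^\ast$. The general proof used only the crude bounds $\rho_{\mathrm{cap}}\le 2\eta$ and $\rho_0 \le 2\eta$ (each giving $\le 6\eta$ via a ball that simultaneously contains a $T_\ell$-point and its representative); the improvement for $q=1$ comes from noting that we need only a single representative near $i^\ast$, so we may play the two bounds against each other.

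The crux is a timing/mass argument exploiting that Greedy Capture opens balls in order of increasing radius and that each opened ball removes exactly unit mass. If $i^\ast$ is not captured until radius $\rho_{\mathrm{cap}}$, then its $2\eta$-neighborhood---which carries mass at least $1$---must have been drained below unit unallocated mass by balls opened earlier; each such draining ball captures a point of $B(i^\ast,2\eta)$, so if any of them had small radius we would already obtain a close representative through the second bound. Quantifying how much neighborhood mass can be removed before $i^\ast$ is forced to join a ball yields a relation of the form $\rho_{\mathrm{cap}} \le \rho_0 + \eta$ together with $\rho_0 \le \tfrac{1}{2}(\alpha-2)\eta$ in the extremal case, and balancing $2\rho_{\mathrm{cap}}$ against $2\eta+2\rho_0$ at the worst case forces $\alpha$ to satisfy $\alpha^2-3\alpha-2=0$, i.e.\ $\alpha = \frac{3+\sqrt{17}}{2}$. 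Making this trade-off rigorous---carefully tracking which individuals are removed, by balls of which radii, without inadvertently creating a representative within $\alpha\eta$ of $i^\ast$---is the main obstacle, since the accounting must hold for an arbitrary (not necessarily one-dimensional) metric and for an adversarial tie-breaking in the allocation step.

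For tightness I would construct an explicit instance realizing the extremal configuration surfaced by the analysis: a low-density neighborhood around $i^\ast$ whose mass is just short of $1$, together with auxiliary points placed so that $\fgc_k$ first opens a ball centered at distance $\approx \rho_0$ from $i^\ast$ (draining part of the neighborhood with a representative kept far from $i^\ast$) and only later captures $i^\ast$ by a ball of radius $\rho_{\mathrm{cap}}=\rho_0+\eta$ centered off $i^\ast$, so that the representative serving $i^\ast$ sits at distance $2\rho_{\mathrm{cap}} = \alpha\eta$ while a size-$(n/k)$ group can deviate to a single point at distance $\eta$. Verifying that the greedy opening order and unit-mass captures indeed produce this panel, and that every member of the group improves by the factor $\alpha$, completes the tightness direction.
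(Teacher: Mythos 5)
Your proposal has the right target constant and the right general flavor (two competing distance bounds traded off through a free ratio, producing a quadratic in $\alpha$), but the step you yourself flag as ``the main obstacle'' is exactly the step that is missing, and the specific relations you posit to close it ($\rho_{\mathrm{cap}} \le \rho_0 + \eta$ and $\rho_0 \le \tfrac{1}{2}(\alpha-2)\eta$ ``in the extremal case'') are asserted rather than derived, and I do not see how to derive them: nothing in the algorithm forces the ball that eventually absorbs $i^\ast$'s mass to have radius within $\eta$ of the first ball touching $B(i^\ast,2\eta)$, and the first draining ball can a priori have any radius up to $2\eta$, not $0.78\eta$. A second, structural problem is your choice of witness. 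You anchor everything at a single individual $i^\ast$ (the max-cost member of $T_\ell$) and try to certify one representative within $\alpha\eta$ of that one individual. But to refute the deviation you only need \emph{some} member of $S$ whose improvement factor is at most $\alpha$, and the sharp constant genuinely requires exploiting two different members. Also, anchoring the balls at $i^\ast$ forces you through the radius-$2\eta$ ball $B(i^\ast,2\eta)$, which is lossy compared to anchoring at the deviating representative itself.

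The paper's route is: first reduce to $|P'| = 1$ (partition $S$ by closest deviating representative; some part has size at least $n/k$), so $P' = \{j^\ast\}$ and $c_1(i,P') = d(i,j^\ast)$ exactly for all $i \in S$. Let $i'$ be the member of $S$ farthest from $j^\ast$; then $B(j^\ast, d(j^\ast,i'))$ contains all of $S$ and hence unit mass. Either a ball centered at $j^\ast$ opens by radius $d(j^\ast,i')$ --- giving a representative within $2d(j^\ast,i')$ of $i'$, i.e.\ factor $2$ for $i'$ --- or some $i'' \in S$ is first drained by a ball of radius at most $d(j^\ast,i')$, giving $c_1(i'',P) \le 2d(i',j^\ast)$ and, via \Cref{lem:cost-distance}, $c_1(i',P) \le 3d(i',j^\ast) + d(i'',j^\ast)$. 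Taking the minimum of the two improvement factors as a function of $z = d(i',j^\ast)/d(i'',j^\ast)$ yields $\max_{z\ge 0}\min(3 + 1/z,\, 2z) = (3+\sqrt{17})/2$; the quadratic comes from $2z = 3 + 1/z$, not from a timing relation between ball radii. For tightness the paper exhibits a concrete $28$-point instance with $k=7$ (four far-apart isomorphic $7$-point blocks with carefully chosen pairwise distances), whereas your tightness direction is only a qualitative description of the extremal configuration. To repair your argument you would need to re-center the analysis at the deviating representative, introduce the second witness $i''$, and replace the unproven timing relations with the explicit ratio optimization.
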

\begin{proof}
 Let $P$ be any panel that the  algorithm may return.   Suppose for contradiction that there exists a panel $P'$  such that $V_q(P,P',(3+\sqrt{17})/2)\geq |P'| \cdot n/k $. This means that there exists $S\subseteq[n]$, with $|S|\geq |P'|\cdot n/k$, such that: 
\begin{align}
{
  \forall i \in S,  \quad \quad c_{q}(i,P)> (3+\sqrt{17})/2 \cdot  c_{q}(i,P').
  }
\end{align}
If  $|P'|>1$,  we can partition $S$ into $|P'|$ groups by assigning each individual to their closest representative  from $P'$, and at least one of these groups  should have size at least $n/k$.   Therefore, without loss of generality, we can assume that $|P'|=1$ and $|S|\geq n/k$. 

Let $P'=\{i^*\}$ and  $i'$ be the individual in $S$ that has the largest distance from $i^*$. 
 Since there are sufficiently many individuals in the ball $B(i^*, d(i^*,i'))$, it is possible that the algorithm opened it during its execution. If this happened, this means that  there is at least one representative in $P$ that is located within this ball. Then, we get that $i'$ has a distance at most equal to the diameter of the ball from her  closest representative in $P$ which is at most $2\cdot d(i', i^*)= 2\cdot c_q(i',P')$. Hence,  $i'$ cannot  reduce her distance by a multiplicative factor larger than $2$ by choosing $P'$, and we reach in a contradiction. 
 
 On the other hand, if the algorithm did not open this ball during its execution, this means that some of the individuals in $T$ have been allocated to different balls before the ball centered at $i^*$ captures sufficiently many of them. Hence,  some individuals in $S$ have been captured from a different ball with radius at most $d(i',i^*)=c_q(i', P')$. Suppose that $i''$ is the first individual in $S$ that was captured from such a ball. Then, we have that within this ball there is $1$ representative in $P$. Hence $c_q(i'',P)\leq 2\cdot d(i',i^*)$, since the distance of $i''$ form any other individual in this ball is at most equal to the diameter of the ball. We consider the minimum multiplicative improvement of both $i'$ and $i''$: 
\begin{align*}
& \min \left(  \frac{c_q(i',P)}{c_q(i',P')}, \frac{c_q(i'',P)}{c_q('',P')} \right)\\
    =& \min \left(  \frac{c_q(i',P)}{d(i',i^*)}, \frac{c_q(i'',P)}{d(i'',i^*)} \right)\\
    \leq & \min \left(  \frac{d(i',i'')+c_q(i'',P)}{d(i',i^*)}, \frac{c_q(i'',P)}{d(i'',i^*)} \right)
    &\text{(by \Cref{lem:cost-distance})}\\
    \leq & \min \left(  \frac{d(i',i^*)+d(i^*,i'')+ c_q(i'',P)}{d(i',i^*)}, \frac{c_q(i'',P)}{d(i'',i^*)} \right)
    \\
    \leq & \min \left(  \frac{d(i',i^*)+d(i^*,i'')+ 2\cdot d(i',i^*)}{d(i',i^*)}, \frac{2\cdot d(i',i^*)}{d(i'',i^*)} \right)
    & \text{(as $c_q(i'',P)\leq 2\cdot d(i',i^*)$ )}
    \\
     \leq &\max_{z\geq 0} \min(3+ 1/z, 2\cdot z) = (3+\sqrt{17})/2.
\end{align*}

To show that this bound is tight consider the case that $n=28$ and $k=7$. Assume that the individuals form 
four isomorphic sets of $7$ individuals each such that  each set is sufficiently far from all other sets.   The distances between the individuals in  one set are given in the table below.
\begin{table}[ht]
    \centering
    \begin{tabular}{c|c ccc ccc}

        & $a_1$ & $a_2$ & $a_3$ & $a_4$ & $a_5$ & $a_6$ & $a_7$\\ 
        \hline  
        $a_1$ &   $0$  & $1$   & $2$ & $\frac{\sqrt{17}-1}{2}$& $\frac{\sqrt{17}+1}{2}-\epsilon$& $\frac{\sqrt{17}+1}{2}-\epsilon$& $\frac{\sqrt{17}+3}{2}-2\cdot \epsilon$ \\

        $a_2$ &   $1$  & $0$   & $1$ & $\frac{\sqrt{17}-3}{2}$& $\frac{\sqrt{17}-1}{2}-\epsilon$& $\frac{\sqrt{17}-1}{2}-\epsilon$& $\frac{\sqrt{17}+1}{2}-2\cdot \epsilon$ \\
       
        $a_3$ &   $2$  & $1$   & $0$ & $\frac{\sqrt{17}-1}{2}$& $\frac{\sqrt{17}+1}{2}-\epsilon$& $\frac{\sqrt{17}+1}{2}-\epsilon$& $\frac{\sqrt{17}+3}{2}-2\cdot \epsilon$ \\
     
        $a_4$ &   $\frac{\sqrt{17}-1}{2}$  & $\frac{\sqrt{17}-3}{2}$   & $\frac{\sqrt{17}-1}{2}$ & $0$& $1-\epsilon$& $1-\epsilon$& $2-2\epsilon$ \\
         
        $a_5$ &   $\frac{\sqrt{17}+1}{2}-\epsilon$  & $\frac{\sqrt{17}+1}{2}-\epsilon$   & $\frac{\sqrt{17}-1}{2}-\epsilon$ & $1-\epsilon$& $0$& $0$& $1-\epsilon$ \\
            
        $a_6$ &   $\frac{\sqrt{17}+1}{2}-\epsilon$  & $\frac{\sqrt{17}+1}{2}-\epsilon$   & $\frac{\sqrt{17}-1}{2}-\epsilon$ & $1-\epsilon$& $0$& $0$& $1-\epsilon$ \\
        
        $a_7$ &   $\frac{\sqrt{17}+3}{2}-2\epsilon$  & $\frac{\sqrt{17}+3}{2}-2\epsilon$   & $\frac{\sqrt{17}+1}{2}-2\epsilon$ & $2-2\epsilon$& $1-\epsilon$& $1-\epsilon$& $10$ \\
  
    \end{tabular}
\end{table}

\noindent Since $k=7$ and there are four isomorphic groups, there exists a group that has at most one representative in some realized  panel. Note that the algorithm first opens the balls that are centered at $a_5$ and have radius equal to $1-\epsilon$. Assume that when this ball  was opened in  the group that has one representative in the panel, the algorithm chooses $a_7$ to be included in the panel. Then, in this group the individuals $a_1$, $a_2$, $a_3$ and $a_4$ are eligible to choose $a_2$ and all of them reduce their distance by a multiplicative factor of at least $(3+\sqrt{17})/2$ as $\epsilon$ goes to zero. 
\end{proof}

\section{$\afgc$ with Known $q$}\label{sec:afgc}

\begin{algorithm}[t] 
	\caption{$\afgc$}\label{alg:GC-q}
	\KwIn{Individuals $[n]$, metric $d$, $k$, $q$}
	\KwOut{Panel $P$}
        
   	$R\gets [n]; \delta\gets 0; P \gets \emptyset$\;
    \While{$|R|\geq \ceil{q \cdot n/k}$}{
        Smoothly increase $\delta$\;
        \While{$\exists j\in R $ such that $|B(j,\delta) \cap R|\geq \ceil{q \cdot n/k }$}{
       $S\gets  \ceil{ q\cdot n/k} $ individuals arbitrary chosen from $B(j,\delta)$\;
        $\hat{P}\gets$ pick $q$ individuals from $S$ uniformly at random\;
         $P\gets P \cup \hat{P}$\;
        $R\gets R \setminus S$\;
        }}
    \If{$|P|<k$}{
       $\hat{P}\gets $  $k-|P|$ individuals from $ [n]\setminus  P$ by picking $i\in R$ with probability $k/n$ and $i\in [n]\setminus (P\cup R)$ with probability $\frac{k-|P|- |R|\cdot k/n }{n  - |P| - |R|}$\;
        $P \gets P \cup \hat{P}$\;}
\end{algorithm}

Here, we show that there exists a version of $\fgc$ such that if $q$ is known, it provides an approximation of $((5 + \sqrt{41})/2)$ to the ex post $q$-core. 
 As before, our algorithm leverages  the basic idea of  the Greedy Capture algorithm.   
 
 $\afgc_{k,q}$, in~\Cref{alg:GC-q},  starts with an empty panel $P$ and  grows a ball around every  individual in $[n]$ at the same rate. When  a ball captures  $\ceil{q\cdot n/k}$ individuals (if more than $\ceil{q\cdot n/k}$ individuals 
 have been captured, it chooses exactly $\ceil{q\cdot n/k}$ by arbitrarily excluding some points on the boundary),  the algorithm selects  $q$ of them uniformly at random, includes them in the panel $P$,  and  disregards all the  $\ceil{q\cdot n/k}$ individuals. When this happens,  we say that the algorithm {\em detects} this ball.  Unlike Greedy Capture, we continue growing balls only around the individuals that are not yet disregarded, i.e. detected balls are frozen. When fewer than $\ceil{q\cdot n/k}$ individuals remain, the algorithm selects the remaining representatives from among the individuals who have not yet been included in the panel as follows: each individual who has not been disregarded is selected with probability $k/n$, and the remaining probability mass is allocated uniformly among the individuals who have been disregarded but not selected. This can be achieved through systematic sampling~\cite{yates1948systematic}.

\begin{theorem}
For every $q$, $\afgc_{k,q}$ is fair and  in the ex post $((5 + \sqrt{41})/2$-$q$-core. 
\end{theorem}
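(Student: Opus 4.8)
The plan is to establish the two claims separately, reusing the machinery already developed for \Cref{thm:ex-post-core}. For fairness, I would first observe that the ball-detection phase of $\afgc_{k,q}$ is \emph{deterministic}: which balls are opened, how many there are (say $t$), and hence the sizes $|P|=tq$ and $|R|=n-t\ceil{q\cdot n/k}$ at the start of the final step, are all fixed by $d$ alone; the only randomness is which $q$ of the $\ceil{qn/k}$ captured points in each ball are selected. Conditioning on this structure, any individual that survives to the final step lies in $R$ and is taken with probability $k/n$ by construction. An individual captured in some ball is selected during the ball phase with probability $q/\ceil{qn/k}$, and otherwise is topped up in the final step with probability $p=\frac{\ceil{qn/k}\cdot k/n-q}{\ceil{qn/k}-q}$, which is exactly the quantity $\frac{k-|P|-|R|\,k/n}{n-|P|-|R|}$ after substituting the deterministic sizes. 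A one-line computation then gives total selection probability $\frac{q}{\ceil{qn/k}}+\bigl(1-\frac{q}{\ceil{qn/k}}\bigr)p=k/n$; I would also check $p\in[0,1]$ and $tq\le k$, both of which follow from $\ceil{qn/k}\ge qn/k$.

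For the core guarantee I would argue by contradiction as in \Cref{thm:ex-post-core}: assuming $\CV(P,P',\alpha)\ge |P'|\cdot n/k$ with $\alpha=(5+\sqrt{41})/2$, take the witnessing set $S$ and partition it via the first part of \Cref{lem:partition} into at most $\floor{|P'|/q}$ parts, so some part $T_\ell$ has $|T_\ell|\ge q\cdot n/k$, hence $|T_\ell|\ge\ceil{qn/k}$. Writing $i^*=i^*_\ell$ and $c^*=c_q(i^*,P')$, the sharper geometric input I would use is that for every $i\in T_\ell$ a shared representative $r\in\top(i,P')\cap\top(i^*,P')$ gives $d(i,i^*)\le c_q(i,P')+c^*\le 2c^*$; this refinement over the crude $2c^*$ bound is what drives the improved constant. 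In particular $T_\ell\subseteq B(i^*,2c^*)$ contains at least $\ceil{qn/k}$ points, so the ball centered at $i^*$ must fill by radius $R'=\max_{i\in T_\ell}d(i,i^*)\le 2c^*$; I would then show the first $T_\ell$-point is captured at some radius $\delta_1\le R'\le 2c^*$ by inspecting the state of $R$ at radius $R'$.

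The heart of the argument is then a short chain on $c_q(i^*,P)$. Let $i_1\in T_\ell$ be the first captured point, lying in a detected ball of radius $\delta_1$ whose $q$ selected members all lie within $2\delta_1$ of $i_1$; thus $c_q(i_1,P)\le 2\delta_1$. Combining \Cref{lem:cost-distance}, the shared-representative bound $d(i^*,i_1)\le c_q(i_1,P')+c^*$, and the fact that $i_1\in S$ forces $\alpha\,c_q(i_1,P')<c_q(i_1,P)\le 2\delta_1$, I would write
\begin{align*}
c_q(i^*,P)\le d(i^*,i_1)+c_q(i_1,P)\le \bigl(c_q(i_1,P')+c^*\bigr)+2\delta_1< c^*+2\delta_1\Bigl(1+\tfrac1\alpha\Bigr)\le c^*\Bigl(5+\tfrac4\alpha\Bigr),
\end{align*}
using $\delta_1\le 2c^*$ in the last step. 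Since $\alpha=(5+\sqrt{41})/2$ is the positive root of $\alpha^2-5\alpha-4=0$, we have $5+4/\alpha=\alpha$, whence $c_q(i^*,P)<\alpha\,c^*=\alpha\,c_q(i^*,P')$, contradicting $i^*\in S$.

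I expect the main obstacle to be the capture-radius step: making rigorous that the first $T_\ell$-point is disregarded no later than radius $R'$, since detections at a fixed radius are processed in an arbitrary order and a detected ball need not be centered at $i^*$ nor even touch $T_\ell$. I would handle this with a termination argument at radius $R'$ — as long as $i^*$ and all of $T_\ell$ remain in $R$, the ball centered at $i^*$ stays a valid trigger, so detections cannot cease until either $i^*$ leaves $R$ or $T_\ell$ loses a member, both of which constitute a $T_\ell$-capture at radius $\le R'$. A secondary point to get right is that at least one $T_\ell$-point is captured during the ball phase rather than surviving to the final step, which holds because $|T_\ell|\ge\ceil{qn/k}$ exceeds the number of survivors $|R|<\ceil{qn/k}$.
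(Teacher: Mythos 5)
Your proposal is correct and follows essentially the same route as the paper: the same use of \Cref{lem:partition} to extract a part $T_\ell$ of size at least $\ceil{qn/k}$ contained in $B(i^*_\ell, 2c_q(i^*_\ell,P'))$, the same observation that this forces a first member of $T_\ell$ to be disregarded by a ball of radius at most $2c_q(i^*_\ell,P')$ (hence $q$ panel members within $4c_q(i^*_\ell,P')$ of it), and the same fixed-point equation $\alpha = 5 + 4/\alpha$ --- your single chain bounding $c_q(i^*_\ell,P)$ is algebraically equivalent to the paper's $\max_{z\ge 0}\min(4z,\,5+1/z)$ computation, and your unified case subsumes the paper's split into ``ball detected'' versus ``not detected.'' For fairness, your direct evaluation of the top-up probability $p=\frac{\ceil{qn/k}\cdot k/n-q}{\ceil{qn/k}-q}$ replaces the paper's symmetry-plus-linearity-of-expectation argument, but both verify the same identity.
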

\begin{proof}

We start by showing that the algorithm is fair. 
\begin{lemma}\label{lemma:fairness-of-fgc}
$\afgc_{k,q}$ is fair.  
\end{lemma}
\begin{proof}
     Suppose that $q\cdot n/k$ is an integer. Then, each individual that is disergarded in the while loop of the algorithm is included in the panel with probability exactly $k/n$. Now, suppose that after the algorithm has detected $t$ balls,   less than $q\cdot n/k$ individuals have not been disregarded. Then, when the algorithm exits the while loop we have that $|R|=n-t\cdot q\cdot n/k$ and $k-|P|=k-t\cdot q$. But since, 
\begin{align*}
    |R|\cdot k/n= k-t\cdot q,
\end{align*}
we conclude that  the remaining $k-t\cdot q$ representatives are chosen uniformly among the individuals in $R$. Thus,  the algorithm returns a panel of size $k$ and each $i\in [n]$ is chosen with probability $k/n$.

Now, we focus on the case  where $q\cdot n/k$ is not an integer. In this case,  note that in the while loop of the algorithm, less  than $k$ representatives are included in the panel, since $q$ representatives are included in it every time that $\ceil{q\cdot n/k}$ non-disregarded individuals are captured from a ball. Moreover,  each individual that is disregarded is chosen with probability strictly less than $k/n$. Now suppose that after exiting the while loop, there are individuals that have not been disregarded, i.e. $|R|>0$. First, we show  that the algorithm correctly chooses another $k - |P|$ representatives and outputs a panel of size $k$.  The algorithm would select each individual in $R$ with probability $k/n$ and allocates the remaining probability --- which is equal to $k-|P|- |R|\cdot k/n$ ---uniformly  among the $n - |P| - |R|$ individuals  that have been disregarded but not selected in $P$.
To satisfy fairness for people in $R$, it suffices to show that $|R|\cdot k/n < k-|P|$. Since for each individual $i \in [n] \setminus R$ we have $\Pr[i \in P] = q/\ceil{q\cdot n/k} < k /n$, then we have $|P| = \E[|P|] = \sum_{i \in [n] \setminus R} \Pr[i \in P] < (n - |R|) \cdot k/n$. Thus,
$k-|P|> k-(n-|R|)\cdot k/n=|R|\cdot k/n.$
Hence, the algorithm outputs panels of size $k$.

It remains to show that each  individual in $[n] \setminus R$, which is disregarded in the while loop, is  included in the panel with probability $k/n$. First, note that all of them are included in the panel with the same probability. This holds, since each is selected with probability $q/\ceil{q\cdot n/k}$ from the ball that captured them in the while loop, and, when not selected in the while loop, they get an equal chance of selection of $\frac{k - |P| - |R| \cdot k/n}{n - |R| - |P|}$.  
Since the size of the final panel returned by the algorithm is always $k$, and by linearity of expectation, we have $k = |R| \cdot k/n + \sum_{i \in [n]\setminus R} \Pr[i]$. By equality of $\Pr[i]$'s, we conclude that all must be equal to $k/n$ and each individual in $[n]$ is included in the panel with probability $k/n$.
\end{proof}

 We proceed by showing that $\afgc_{k,q}$ is in the ex post $((5 + \sqrt{41})/2)$-$q$-core.
 Let $P$ be any panel that the  algorithm may return.   Suppose for contradiction that there exists a panel $P'$  such that $\CV(P,P',(5+\sqrt{41})/2)\geq |P'| \cdot n/k $. This means that there exists $S\subseteq[n]$, with $|S|\geq |P'|\cdot n/k$, such that:
\begin{align}\label{eq:afgc-ex-post}
  \forall i \in S,  \quad \quad c_{q}(i,P)> (5+\sqrt{41})/2 \cdot  c_{q}(i,P').
\end{align}

Let $T_1,\ldots, T_m$ be a partition of $S$ with respect to $P'$, as given in the first part of~\Cref{lem:partition}. In a similar way as in the proof of~\Cref{thm:ex-post-core}, we can conclude that there exists a ball centered at $i^*_{\ell}$ that has radius $2\cdot c_{q}(i^*_{\ell},P')$ and captures all the individuals in $T_{\ell}$. Since there are sufficiently many individuals in this ball, it is possible that the algorithm detected this ball (or a nested one) during its execution. If this happened, this means that  there are $q$ representatives in $P$ that are located within the ball $B(i^*_{\ell},2\cdot c_{q}(i^*_{\ell},P'))$. Then, we get that
 $c_{q}(i^*_{\ell},P)\leq 2\cdot c_{q}(i^*_{\ell},P')$ which contradicts \Cref{eq:afgc-ex-post}. 

 On the other hand, if the algorithm did not detect this ball (or a nested one) during its execution, this means that some of the individuals in $T_{\ell}$ have been disregarded before the ball centered at $i^*_{\ell}$ captures sufficiently many of them. Hence,  some individuals in $T_{\ell}$ have been captured from a different ball with radius at most $2\cdot c_{q}(i^*_{\ell},P')$. Suppose that $i'$ is the first individual in $T_{\ell}$ that was captured from such a ball. Then, we get that $q$ representatives  in $P$ are within this ball, which means that  $c_q(i',P)\leq 4\cdot c_{q}(i^*_{\ell},P')$, since the distance of $i'$ form any  individual in this ball is at most equal to the diameter of the ball. We consider the minimum multiplicative improvement of both $i^*_{\ell}$ and $i'$: 
\begin{align*}
    & \min \left(  \frac{c_q(i',P)}{c_q(i',P')}, \frac{c_q(i^*_{\ell},P)}{c_q(i^*_{\ell},P')} \right)\\
    \leq & \min \left(  \frac{c_q(i',P)}{c_q(i',P')}, \frac{d(i^*_{\ell},i')+c_q(i',P)}{c_q(i^*_{\ell},P')} \right)
    &  \text{(by \Cref{lem:cost-distance})}
    \\
    \leq & \min \left(  \frac{c_q(i',P)}{c_q(i',P')}, \frac{d(i^*_{\ell},r_{i'}) +d(i',r_{i'})  +c_q(i',P)}{c_q(i^*_{\ell},P')} \right)  & \text{(by Triangle Inequality)}
    \\
    \leq & \min \left(  \frac{c_q(i',P)}{c_q(i',P')}, \frac{c_q(i^*_{\ell},P') +c_q(i',P')  +c_q(i',P)}{c_q(i^*_{\ell},P')} \right)
    &   \text{(as $r_{i'}\in \top_q(i',P')\cap \top_q(i^*_{\ell},P'))$}\\
     \leq & \min \left(  \frac{4\cdot c_q(i^*_{\ell},P')}{c_q(i',P')}, 5+\frac{ c_q(i',P')  }{c_q(i^*_{\ell},P')} \right) &  \text{(as  $c_q(i',P)\leq 4\cdot  c_q(i^*_{\ell},P'))$}\\
     \leq &\max_{z\geq 0} \min(4\cdot z, 5+1/z) = 
     (5+\sqrt{41})/2
\end{align*}
which  violates \Cref{eq:afgc-ex-post}  and the theorem follows. 
\end{proof}

\end{document}